\newif\ifnotes
\notestrue

\documentclass[11pt]{article}

\usepackage[margin=1in]{geometry}
\usepackage{amsmath,amssymb,amsthm,amsfonts,mathtools}
\usepackage{microtype}
\usepackage{enumitem}
\usepackage[pagebackref]{hyperref}
\hypersetup{colorlinks=true,linkcolor=blue,citecolor=blue,urlcolor=blue}
\usepackage[dvipsnames]{xcolor}
\usepackage{colortbl}
\usepackage{xspace}

\theoremstyle{plain}
\newtheorem{theorem}{Theorem}[section]
\newtheorem{lemma}[theorem]{Lemma}
\newtheorem{proposition}[theorem]{Proposition}
\newtheorem{corollary}[theorem]{Corollary}

\theoremstyle{definition}
\newtheorem{definition}[theorem]{Definition}
\newtheorem{remark}[theorem]{Remark}

\newcommand{\F}{\mathbb{F}}

\newcommand{\Tr}{\operatorname{Tr}}

\newcommand{\abs}[1]{\left|#1\right|}
\DeclareMathOperator{\Span}{Span}

\title{Algebraic Expander Codes}
\author{Swastik Kopparty\thanks{
Department of Mathematics and Department of Computer Science, University of Toronto. Email: \url{swastik.kopparty@utoronto.ca}.
} \and Itzhak Tamo
\thanks{ Department of Electrical Engineering-Systems, Tel Aviv
University. Email: \url{tamo@tauex.tau.ac.il}. I. Tamo was supported  by the European Research Council (ERC) under Grant 852953.}}

\date{\today}

\begin{document}
\maketitle

\begin{abstract}
Expander (Tanner) codes combine sparse graphs with local constraints, enabling linear-time decoding and asymptotically good distance--rate tradeoffs. A standard constraint-counting argument yields the global-rate lower bound $R\ge 2r-1$ for a Tanner code with local rate $r$, which gives no positive-rate guarantee in the low-rate regime $r\le 1/2$. This regime is nonetheless important in applications that require algebraic local constraints (e.g., Reed--Solomon locality and the Schur-product/multiplication property).

We introduce \emph{Algebraic Expander Codes}, an explicit algebraic family of Tanner-type codes whose local constraints are Reed--Solomon and whose global rate remains bounded away from $0$ for every fixed $r\in(0,1)$ (in particular, for $r\le 1/2$), while achieving constant relative distance.

Our codes are defined by evaluating a structured subspace of polynomials on an orbit of a non-commutative subgroup of $\mathrm{AGL}(1,\mathbb{F})$ generated by translations and scalings. The resulting sparse coset geometry forms a strong spectral expander, proved via additive character-sum estimates, while the rate analysis uses a new notion of polynomial degree and a polytope-volume/dimension-counting argument.
\end{abstract}

\section{Introduction}

Expander codes, introduced by Sipser and Spielman \cite{SipserSp96}, instantiate the Tanner-code viewpoint \cite{tanner1981recursive}: a long code is specified by local constraints placed on a sparse graph.
When the underlying graph has strong expansion, this yields asymptotically good codes with efficient decoding algorithms (see, e.g., \cite{SipserSp96,zemore-expander-decoding,hlw2006expander}).

A central challenge in the design of expander codes is the inherent tension between the rate of the local code and the guaranteed rate of the global code. For an expander  code defined on a regular graph with local codes of rate $r$, the standard constraint-counting argument yields the lower bound on the global rate $R \ge 2r - 1$ \cite{tanner1981recursive}.
This bound implies a sharp threshold: to guarantee a positive global rate via this argument, the local codes must have rate $r > 1/2$. In the ``low-rate regime'' where $r \le 1/2$, the bound $2r-1$ becomes non-positive. This does not preclude positive-rate constructions, but it removes the standard black-box guarantee.

Despite this barrier, the regime of $r \le 1/2$ is of  theoretical and practical interest, particularly when the local codes possess specific algebraic structures. A primary motivation stems from the \emph{multiplication property} of Reed--Solomon (RS) codes. If $\mathcal{C}$ is an RS code of rate $r$, the component-wise product (Schur product) of two codewords belongs to an RS code of rate approximately $2r$.
This follows because the product of two polynomials of degree less than $k$ has degree less than $2k-1$, and evaluation commutes with multiplication \cite{randriambololona2015products}.
For the product code to be a non-trivial error-correcting code (i.e., rate strictly less than $1$), the initial code must satisfy $r < 1/2$.

This multiplication property is important in applications such as fault-tolerant quantum computing and High-Dimensional Expanders (HDX). For example, recent constructions of Quantum LDPC codes with transversal non-Clifford gates by Golowich and Lin \cite{golowich2024quantum} rely heavily on products of algebraic codes with rates below half. Similarly, Dinur, Liu, and Zhang \cite{dinur2023new} highlight the importance of constructing codes with the multiplication property, and construct high-dimensional expanders with this property. Thus, there is a compelling need for explicit expander codes that maintain the structure of low-rate Reed--Solomon local constraints while achieving a non-trivial global rate.

In this work, we resolve these challenges. We present an explicit construction of expander codes that achieve a positive global rate for any fixed local rate $r \in (0, 1)$, and specifically in the challenging regime of $r \le 1/2$. Crucially, our construction ensures that the local constraints are Reed--Solomon codes. This allows our codes to support the multiplication property required for the aforementioned applications while simultaneously achieving linear minimum distance via spectral expansion \cite{barg2006distance}.

\paragraph{Main theorem (informal).}
Fix any constant local rate $r\in(0,1)$. For infinitely many block lengths $n$ we give an explicit linear code $\mathcal{C}\subseteq \mathbb{F}^n$ such that:
(i) every coordinate participates in two local Reed--Solomon constraints of asymptotic rate at most $r$;
(ii) $\mathcal{C}$ has constant rate and linear (constant relative) distance; and
(iii) the underlying bipartite coset graph is a strong spectral expander (quantified by an explicit second singular value bound).

\paragraph{What is new.}
Unlike the standard “graph + local code” paradigm (where one chooses a $d$-regular expanding graph and a length-$d$ local code independently and then \emph{glues} them together via a Tanner construction), our approach is purely algebraic.
We define a single orbit-evaluation code, and the Tanner/expander graph arises as an \emph{emergent} coset geometry coming from the action of two subgroups.

We term these codes \emph{Algebraic Expander Codes} to emphasize this distinction.
A second key point is that the two subgroups are of \emph{opposing algebraic types}, translations and scalings.
Their non-commutative interaction yields a sparse underlying geometry (in contrast with the dense, grid-like geometries that arise from commuting subgroups) while still admitting strong spectral expansion.

\subsection{Connection to Locally Recoverable Codes and Graph Density}
\label{sec:connection-to-LRC}
Our construction builds upon the algebraic framework of Locally Recoverable Codes (LRCs) established by the second author and Barg \cite{tamo2014family}. In that framework, a code is defined by evaluating a special subspace of polynomials over some orbits of a subgroup $G$ of the affine group $\mathrm{AGL}(1,\mathbb{F})$. The special subspace of polynomials guarantees that the restriction of any codeword to an orbit of $G$ corresponds to a univariate polynomial of bounded degree, ensuring the local view is an RS code.

In \cite{tamo2014family}, this algebraic framework was extended to construct codes with multiple recovering sets, where each coordinate participates in multiple (typically two) local codes. To achieve this, the evaluation set is partitioned into orbits under the action of two distinct subgroups of $\mathrm{AGL}(1,\mathbb{F})$, denoted as $G$ and $H$. However, the constructions in \cite{tamo2014family} were restricted to subgroups of the \emph{same algebraic type}, either two additive subgroups representing translations in distinct directions, or two multiplicative subgroups representing two different scalings.

We note that this approach had two primary limitations. First, it did not support arbitrary local rates, requiring $r > 1/2$ to ensure a non-trivial rate, which mirrors the bounds of standard expander codes. Second, and more critically, using subgroups of the same type imposes inherent structural limitations on the code's underlying graph, which we explain next.

These codes admit a natural bipartite graph representation: code coordinates correspond to edges, while local codes (the orbits of $G$ or $H$) correspond to vertices. Specifically, an edge connects the unique $G$-orbit and $H$-orbit containing that coordinate. The degree of any vertex is simply the size of its corresponding orbit.

When $G$ and $H$ share the same algebraic type, they necessarily commute. Consequently, the resulting structure is a complete bipartite graph. While such graphs are excellent spectral expanders, their maximal degree scales linearly with the number of vertices. This contrasts with the typical objective of expander code constructions of achieving sublinear, and ideally constant, degrees. The limitation arises because commutative groups generate a group that is somewhat small relative to the subgroup sizes, constraining the code length and vertex count while maintaining high degrees.

These limitations motivate the central question of this work: what code properties arise when the $\mathrm{AGL}$ subgroups are non-commutative? We consider this avenue in this paper.

\subsection{Overview of Results}
\label{sec:overview}

Our main contribution is an explicit construction of expander codes that overcome the limitations of commutative algebraic constructions. We summarize our results.

\paragraph{Main Results.}
Fix any target local rate $r\in(0,1)$ and a sparsity parameter $m\ge 2$. 
Then for all primes $p$, we give an explicit linear code $\mathcal{C} \subseteq \F^n$,
where $n = p^{\Theta(m^2)}$ and $\F$ is a field of characteristic $p$, with the following properties:

\begin{enumerate}
    \item \textbf{Graph Structure:} $\mathcal{C}$ is a Tanner code $\mathrm{Tanner}(\Gamma, (\mathcal{C}_{v})_{v \in V})$, where: 
    \begin{itemize}
        \item $\Gamma = (V, E)$ is a bipartite \emph{spectral expander graph} 
        with $|E| = n$, with left and right degrees both $O(p^{m+1})$, and with normalized second largest singular value at most $O(1/\sqrt{p})$, 
        \item for each vertex $v \in V$, there is a code $\mathcal{C}_v \subseteq \F^{E(v)}$,
        where $E(v)$ denotes the set of edges incident on $v$.
    \end{itemize}
    and thus $\mathcal{C}$ is precisely the set of vectors $f \in \F^E$ such that
        for all $v \in V$:
        $$f|_{E(v)} \in \mathcal{C}_v.$$

    \item \textbf{Algebraic Structure:} $\mathcal{C}$ is defined by evaluating an explicit $\mathbb{F}$-linear space of polynomials on an orbit $\Omega\subseteq \mathbb{F}$ of a group $A=\langle G,H\rangle\le \mathrm{AGL}(1,\mathbb{F})$. 
    
    \item \textbf{RS Locality:} The codes $\mathcal{C}_v$ are Reed--Solomon codes of length $|E(v)|$
and dimension at most $\lceil r|E(v)|\rceil$. In particular, each coordinate participates
in two local RS constraints of local rate $r+o(1)$ and local relative distance at least $1-r-o(1)$.
Because the local constraints are RS, the code inherits the multiplication property required
for quantum and HDX applications.

    \item \textbf{Linear Global Rate:} The global rate $R$ is bounded below by an explicit positive constant depending on $r$ and $m$, even when the local rate $r\le 1/2$. This effectively bypasses the $2r-1$ rate barrier of previous Tanner code constructions (and made the multiplication property impossible to achieve).

 \item \textbf{Linear Distance:} The global relative distance satisfies $\delta \ge (1-r)^2 - o_p(1)$. This comes from the Tanner code structure and the spectral expansion of the underlying graph.
\end{enumerate}

\paragraph{Limitations.} We note two limitations of this construction. First, the alphabet size is not constant, but instead grows polynomially with the block length\footnote{In the first code instantiation in Section~\ref{sec:main-construction}, the alphabet size is in fact linear in the code length, as is typical for Reed--Solomon codes and algebraic LRCs \cite{tamo2014family}.}. Second, while the graph is sparse compared to the commutative case, the degree is polynomial in $n$ (scaling as $O\left(n^{\frac{1}{m+1}}\right)$ rather than constant. Overcoming these limitations remains an open question.

\begin{remark}
Our results are in fact more general by incorporating an additional global degree constraint on the space of evaluated polynomials (see Section~\ref{sec:code-def}). This gives rise to a family of algebraic subcodes of the previously described codes, while providing improved guarantees on the minimum distance. Importantly, these subcodes preserve the same local Reed--Solomon structure, and additionally admit a global algebraic decoding algorithm based on the imposed degree constraint.
\end{remark}
\subsection{Techniques}
To overcome the inherent obstacles in standard expander code constructions, specifically the lower bound on the local rate, we employ a purely algebraic approach.

The most significant challenge lies in proving that the global rate remains positive even when the local rate satisfies $r \le 1/2$. In this regime, the standard dimension-counting argument for expander codes fails. To address this, we introduce a new notion of polynomial degree, explained next.

\paragraph{Establishing the Rate via $u$-Base Degree.}
To lower bound the dimension of the message space, we first introduce the necessary algebraic definitions. Let $g(X)$ and $h(X)$ be polynomials invariant under the actions of $G$ and $H$, respectively. That is, $g(\phi(x)) = g(x)$ for any affine transformation $\phi \in G$ and all $x\in\mathbb{F}$, and similarly for $h(X)$ with respect to $H$.

We define the concept of the \emph{$u$-base degree} as follows: Let $u(X)$ be a polynomial of positive degree $d$. Any polynomial $f(X)$ admits a unique $u$-adic expansion $f(X)=\sum_i c_i(X)u(X)^i$, where $\deg(c_i) < d$ for all $i$. The $u$-base degree of $f$ is defined as the maximum degree among its coefficient polynomials, i.e., $\max_i \{\deg(c_i)\}$.

The task of lower bounding the code dimension is then reduced to bounding the dimension of the space of polynomials that simultaneously possess small $g$-base degree and small $h$-base degree.

A key technical contribution is proving that the $u$-base degree is sub-additive. This property allows us to translate the conditions on the message polynomials into a set of linear constraints on their $h$-base degrees. We then estimate the dimension of the message space by approximating the volume of the resulting high-dimensional polytope.
\paragraph{Spectral 
Expansion via Character Sums.}
To establish the graph expansion, we analyze the spectral gap (equivalently, the second largest singular value of the bi-adjacency matrix) by studying the two-step random walk operator. Using standard Fourier analysis techniques, we express the eigenvalues of this operator in terms of sums of non-trivial additive characters over a multiplicative subgroup \cite{alon2014additive}. By applying classical bounds on these sums (e.g., bounds on Gauss sums), we bound the second singular value of the graph by $O(1/\sqrt{p})$, where $p$ is the characteristic of the field. This confirms that the graph is a strong spectral expander.

\paragraph{Graph Sparsity via Mixed-Type Subgroups.}
As discussed in Section \ref{sec:connection-to-LRC}, constructing expander codes using the LRC framework typically yields graphs with linear degrees if one uses subgroups of the same algebraic type. We overcome this limitation by instantiating the code construction using subgroups of \emph{opposing algebraic types}. Specifically, we define the code using:
\begin{enumerate}
    \item An additive subgroup $G\leq (\mathbb{F},+)$, viewed as a translation group $\{x \mapsto x+g : g\in G\}$, and
    \item A multiplicative subgroup $H\leq (\mathbb{F}^{\times},\cdot)$, viewed as a scaling group $\{x \mapsto hx : h\in H\}$.
\end{enumerate}
Since these two subgroups of $\mathrm{AGL}(1,\mathbb{F})$ do not commute, the group $A = \langle G, H \rangle$ generated by them is significantly larger than the product of the sizes of $G$ and $H$. Consequently, the corresponding graph of the code is much sparser than in the commutative case.

\paragraph{Two explicit instantiations.}
We instantiate the general framework in two complementary ways.
The first instantiation (Section~\ref{sec:main-construction}) takes the simplest multiplicative group $H=\mathbb{F}_{p^m}^\times$ (the entire multiplicative group of a field), yielding a clean spectral analysis, that follows from the orthogonality of additive characters.
This choice also makes the two local lengths (in other words, the degree of the vertices on each side) as balanced as possible (they differ by one). However, the construction requires working over a field large enough to realize the chosen additive subgroup $G$. See Section \ref{sec:first-code} for the definition of the group $G$ and more details.

 The second instantiation (Section~\ref{sec:second-instantiation}) keeps the left local code algebraically simple by taking $G=\mathbb{F}_{p^m}$ and chooses $H$ as a subgroup of $\mathbb{F}_{p^{m+1}}^\times$ of constant index. This introduces a tunable parameter $0<\gamma<1$ that governs the right degree, $|H|=\gamma(p^{m+1}-1)$. The strong expansion  follows  via standard Gauss sum bounds. Unlike the first instantiation, this construction yields a graph with asymmetric degrees. This demonstrates the versatility of the general construction, showing that it can accommodate different group structures and rely on alternative character sum bounds (e.g., Gauss sums vs. orthogonality).

Table \ref{tab:two-instantiations} provides a comparison between the two code instantiations.
\begin{table}[t]
\centering
\scriptsize
\begin{tabular}{lcc}
\hline
 & \textbf{Instantiation I (Sec.~\ref{sec:main-construction})} & \textbf{Instantiation II (Sec.~\ref{sec:second-instantiation})} \\
\hline
Additive subgroup $G$ & roots of $X^{p^m}+X^p+X$ & $\mathbb{F}_{p^m}$ \\
Multiplicative subgroup $H$ & $\mathbb{F}_{p^m}^\times$ & subgroup of $\mathbb{F}_{p^{m+1}}^\times$  \\
Left/right degrees & $|G|=p^m$, $|H|=p^m-1$ & $|G|=p^m$, $|H|=\gamma(p^{m+1}-1),\ 0<\gamma<1$ \\
Character sum bound & Orthogonality of additive characters & Gauss-sum  \\
\hline
\end{tabular}
\caption{Two instantiations of the algebraic expander-code framework.}
\label{tab:two-instantiations}
\end{table}

\subsection{Related Work}
\paragraph{Expander/Tanner codes and expander-based coding.}
Expander codes originate in the work of Sipser and Spielman~\cite{SipserSp96}, building on the Tanner-code viewpoint~\cite{tanner1981recursive}.
A robust theory relates global distance and decoding to expansion (see, e.g.,~\cite{hlw2006expander,zemore-expander-decoding,barg2006distance}).
In the broader TCS literature, sparse-graph codes also appear prominently in the context of linear-time erasure resilience and lossless expanders (e.g.,~\cite{AlonEdmondsLuby95,AlonLuby96,CapalboReingoldVadhanWigderson02}).

\paragraph{Algebraic locality via group actions.}
On the algebraic side, our construction builds most directly on the orbit-evaluation framework for Locally Recoverable Codes introduced by the second author  and Barg~\cite{tamo2014family}, which enforces RS-type local views by restricting message polynomials relative to invariants of an affine-group action.
A key distinction is geometric: the multiple-recovering-set extensions in~\cite{tamo2014family} use pairs of subgroups of the same algebraic type, and hence commuting actions, which naturally lead to dense (grid-like) bipartite geometries.
Our work instead uses two subgroups of \emph{opposing} algebraic types (translations and scalings), whose non-commutative interaction yields a sparse coset geometry while still admitting strong spectral expansion.
Related symmetry- and group-action-based perspectives on sparse codes also appear in, e.g.,~\cite{kaufman-lubotzky}.

\paragraph{Multiplicative structure and applications.}
Our emphasis on Reed--Solomon locality is motivated by the Schur-product (multiplication) behavior of RS and related algebraic codes~\cite{randriambololona2015products}.
This property is central in recent applications, including quantum-code constructions via products of algebraic codes~\cite{golowich2024quantum} and recent HDX/code constructions that explicitly require multiplicative structure~\cite{dinur2023new}.
More broadly, expander-based outer structures have been used to leverage RS algebraic structure for algorithmic coding tasks (see, e.g.,~\cite{GuruswamiRudra05}).

\bigskip
\paragraph{Organization of the paper.}
Section~\ref{sec:general} introduces the general algebraic framework, defines the expander code, and establishes the associated bipartite graph structure.
In Section~\ref{sec:graph-analysis}, we analyze the spectral expansion of this graph by reducing the bound on the second eigenvalue to bounds on additive character sums.
The two explicit code instantiations are presented in Section~\ref{sec:main-construction} and Section~\ref{sec:second-instantiation}.
Finally, Section~\ref{sec:conclusion} offers concluding remarks and discusses open problems.


\section{General Code Construction}\label{sec:general}

\subsection{Algebraic Setup and Evaluation Domain}
\label{Algebraic_Setup}
Let $\mathbb{F}$ be a finite field of characteristic $p$. 
Throughout, we use $X$ to denote a formal indeterminate in polynomial rings (e.g., $\mathbb{F}[X]$), and we use $x\in\mathbb{F}$ to denote a field element (an evaluation point) on which affine transformations act.

Let
\[
\mathrm{AGL}(1,\mathbb{F})=\{\,x\mapsto ax+b : a\in \mathbb{F}^{\times},\ b\in \mathbb{F}\,\}
\]
denote the group of affine transformations acting on $\mathbb{F}$. Let $G,H\le \mathrm{AGL}(1,\mathbb{F})$ be subgroups, and let
\[
A=\langle G,H\rangle
\]
be the subgroup they generate.
In our main instantiations we will take $G$ to be a (pure) translation subgroup and $H$ to be a (pure) scaling subgroup. See definitions ahead.

We assume without loss of generality that $|\mathbb{F}|\ge |A|$ (replacing $\mathbb{F}$ by a sufficiently large extension field if necessary). Under this assumption, there exists an element $\alpha\in \mathbb{F}$ whose stabilizer in $A$ is trivial, i.e.,
$
\phi(\alpha)\neq \alpha \text{ for all } \phi\in A\setminus\{\mathrm{id}\}.
$
Indeed, any non-identity affine map $\phi(x)=ax+b$ has at most one fixed point in $\mathbb{F}$. Hence, the set of elements of $\mathbb{F}$ fixed by some nontrivial element of $A$ has cardinality at most $|A|-1$, and since $|\mathbb{F}|\ge |A|$, we can choose $\alpha$ outside this set.

Let $\Omega$ denote the orbit of $\alpha$ under the action of $A$:
\[
\Omega=\{\phi(\alpha): \phi\in A\}\subseteq \mathbb{F}.
\]
By construction, the stabilizer of $\alpha$ is only $\mathrm{id}$, and therefore the action of $A$ on $\Omega$ is free. Since $\Omega$ is an $A$-orbit, the action is also transitive, and hence it is regular. 

We will construct an evaluation code over $\mathbb{F}$ by evaluating at the points of $\Omega$. Consequently, the block length of the code is
$
n := |\Omega| = |A|.
$

We will next need the notion of invariant polynomials.
 
\paragraph{Invariant polynomials.}
Let $G \le \mathrm{AGL}(1,\mathbb{F})$, and let $f(X)\in \mathbb{F}[X]$. We say that $f$ is \emph{$G$-invariant} if
\[
f(g(x)) = f(x)\qquad\text{for all } g\in G\text{ and all } x\in\mathbb{F}.
\]
In particular, any $G$-invariant polynomial is constant on $G$-orbits in $\mathbb{F}$. Indeed, for every $a\in \mathbb{F}$ and $g\in G$,
\[
f(g(a)) = f(a).
\]

Clearly, every constant polynomial is $G$-invariant, but there are also nontrivial examples. For instance, it is easy to verify  that
\begin{equation}
\label{eq:invariant-poly}
\prod_{g\in G} g(X),
\end{equation}
is $G$-invariant.

We next consider two special cases: when $G$ is a pure translation group and when $G$ is a pure scaling group.

\smallskip
\noindent\textbf{Pure translations.}
Suppose
\[
G=\{\,x\mapsto x+u : u\in U\,\}
\]
for some additive subgroup $U\le (\mathbb{F},+)$. Then \eqref{eq:invariant-poly} becomes
\begin{equation}
\label{eq:invariant-poly2}
\prod_{g\in G} g(X) \;=\; \prod_{u\in U} (X+u) \;=\; \prod_{u\in U}(X-u),
\end{equation}
where the last equality holds since $U$ is closed under negation. The right-hand side of \eqref{eq:invariant-poly2} is precisely the \emph{annihilator polynomial} (also known as the \emph{subspace polynomial}) of the additive subgroup $U$.

\smallskip
\noindent\textbf{Pure scalings.}
Suppose
\[
G=\{\,x\mapsto ux : u\in U\,\}
\]
for some multiplicative subgroup $U\le (\mathbb{F}^\times,\cdot)$. Then \eqref{eq:invariant-poly} becomes
\begin{equation}
\label{eq:invariant-poly3}
\prod_{g\in G} g(X)
\;=\;
\prod_{u\in U} (uX)
\;=\;
X^{|U|}\prod_{u\in U} u.
\end{equation}
Since $\prod_{u\in U}u\in \mathbb{F}^\times$ is a constant, \eqref{eq:invariant-poly3} implies that the monomial $X^{|U|}$ is constant on the $G$-orbits of $\mathbb{F}$ under the scaling action.

\smallskip
\noindent
In the sequel, we take $G$ to be a pure translation group and define its associated $G$-invariant polynomial by
\begin{equation}
\label{eq:additive-subgroup-poly}    
g(X) \;:=\; \prod_{u\in U}(X-u),
\end{equation}

i.e., the right-hand side of \eqref{eq:invariant-poly2}. We will freely identify the translation subgroup $G\le\mathrm{AGL}(1,\mathbb{F})$ with its set of translation parameters $U\le(\mathbb{F},+)$.
Likewise, we take $H$ to be a pure scaling group and define its associated $H$-invariant polynomial by
\begin{equation}
    \label{eq:mult-subgroup-poly}
h(X) \;:=\; X^{|H|}.
\end{equation}
We will repeatedly use that $g(X)$ and $h(X)$ are constant on the corresponding $G$- and $H$-orbits in $\mathbb{F}$.

\subsection{\texorpdfstring{Base $u$ degree of a polynomial}{Base u degree of a polynomial}}

We need the following notion of the degree of a polynomial.

\begin{definition}[Base-$u$ Degree]
    Let $u(X) \in \mathbb{F}[X]$ be a \emph{non-constant} polynomial. Any polynomial $f(X) \in \mathbb{F}[X]$ can be uniquely expanded in base $u(X)$ as:
    \[
        f(X) = \sum_{i} c_i(X) u(X)^i,
    \]
    where $\deg(c_i) < \deg(u)$ for all $i$. We define the \emph{$u$-base degree} of $f$, denoted $\deg_u(f)$, as the maximum degree of its coefficients:
    \[
        \deg_u(f) := \max_i \{ \deg(c_i) \},
    \]
    with the convention $\deg(0):=-\infty$.
\end{definition}
Then, clearly, $\deg_u(f)\in \{-\infty,0,\ldots,\deg(u)-1\}$ and $\deg_u(f+g)\leq \max\{\deg_u(f),\deg_u(g)\}$. 
The following lemma shows that the $u$-base degree is sub-additive. 
\begin{lemma}[Subadditivity of the $u$-base degree]
\label{lem:base-u-subadditive}
Let  $u(X)\in\F[X]$ be a nonconstant polynomial.
Then for all $f,g\in\F[X]$,
\[
    \deg_u(fg)\ \le\ \deg_u(f)+\deg_u(g).
\]
\end{lemma}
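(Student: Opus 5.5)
The plan is to expand both factors in base $u$, multiply termwise, and observe that no ``carrying'' occurs once the coefficient degrees are small; the case where carrying could occur is harmless for a trivial reason.

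Let $d=\deg(u)\ge 1$, and write $a=\deg_u(f)$ and $b=\deg_u(g)$.

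\emph{Trivial cases.} If $f=0$ or $g=0$, then $fg=0$. Both sides equal $-\infty$ under the convention $\deg(0)=-\infty$, so there is nothing to prove. Next, suppose $f,g\neq 0$ and $a+b\ge d$. As remarked right after the definition, $\deg_u(fg)\le d-1$ always holds. Hence $\deg_u(fg)\le d-1<a+b$, and the inequality holds. This is exactly the case where multiplying coefficients could produce terms of degree $\ge d$ that must be re-expanded (``carries''). It would be the main obstacle, but it is disposed of by the trivial upper bound.

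\emph{Main case: $a+b\le d-1$.} Write the unique expansions
\[
f=\sum_i a_i(X)\,u(X)^i \quad\text{and}\quad g=\sum_j b_j(X)\,u(X)^j,
\]
where $\deg a_i\le a$ and $\deg b_j\le b$ for all $i,j$. Multiplying and collecting by powers of $u$ gives
\[
fg=\sum_k c_k(X)\,u(X)^k, \qquad c_k:=\sum_{i+j=k} a_i b_j .
\]
Each product $a_ib_j$ has degree at most $a+b$, so $\deg c_k\le a+b\le d-1<d$. Therefore $\sum_k c_k u^k$ satisfies the defining condition of a base-$u$ expansion of $fg$. By uniqueness of that expansion, the $c_k$ are precisely the base-$u$ coefficients of $fg$. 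Consequently $\deg_u(fg)=\max_k \deg c_k\le a+b$.

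The only points to verify are that uniqueness of the base-$u$ expansion is available, as stated in the definition, and that the $-\infty$ convention handles zero coefficients consistently. Both are immediate.
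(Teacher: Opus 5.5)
Your proof is correct and follows essentially the same route as the paper. In the case $\deg_u(f)+\deg_u(g)\ge \deg(u)$ you use the trivial bound $\deg_u(fg)\le \deg(u)-1$; otherwise you multiply the base-$u$ expansions and invoke uniqueness, since the collected coefficients have degree below $\deg(u)$. Your separate treatment of $f=0$ or $g=0$ is a harmless extra, which the paper absorbs into the main case through the $-\infty$ convention.
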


\begin{proof}
 Let $d:=\deg(u)$.
If $\deg_u(f) + \deg_u(g) \geq d$, then the result is trivially true (since $\deg_u$ of any
polynomial is at most $d - 1$).

Thus assume that $\deg_u(f) + \deg_u(g) < d$, and write
\[
    f=\sum_{i} c_i u^i,\qquad g=\sum_{j} d_j u^j,
\]
with $\deg(c_i),\deg(d_j)<d$. Then
\begin{align}
\label{eqfg}
    fg=\sum_{k\ge 0} e_k\,u^k,
\end{align} where 
$
    e_k := \sum_{i+j=k} c_i d_j \in \F[X]$.
Now, for each $k$, we have
\[
    \deg(e_k)\ \le\ \max_{i+j=k}\big(\deg(c_i)+\deg(d_j)\big)\ \le\ \deg_u(f)+\deg_u(g) < d,
\]
and thus \eqref{eqfg} is the base $u$ expansion of $fg$, and the result follows.
\end{proof}

\subsection{Message Space and Code Definition}
\label{sec:code-def}
In this section we give the definition of the message space and the code.
This is specified by a local rate parameter $r \in (0,1)$, 
subgroups $G$ and $H$ of $\mathrm{AGL}(1, \F)$, and a global degree parameter 
$D \leq n$.
 \begin{remark}
The primary case of interest in this paper is $D = n$, in which the code coincides with a Tanner code defined on the coset graph. The more general regime $D \leq n$ yields a natural family of algebraic subcodes of this Tanner code. 

For a first reading, we therefore recommend focusing on the case $D = n$, as it simplifies the construction while capturing all the main results described in Section~\ref{sec:overview}.
\end{remark}

For the groups $G$ and $H$, let $g(X)$ and $h(X)$ be the associated $G$- and $H$-invariant polynomials, respectively.

We define the message space of polynomials $\mathcal{W}_{r,D}$ by
\begin{equation}
\label{eq:subspace-def}
\mathcal{W}_{r,D}
:=
\left\{
f(X)\in \mathbb{F}[X]\;\Big|\;
\deg(f)<D,\;\;
\deg_{g}(f)< r|G|,\;\;
\deg_{h}(f)< r|H|
\right\}
\footnote{It is straightforward to verify that $\mathcal{W}_{r,D}$ is an $\mathbb{F}$-linear subspace of $\mathbb{F}[X]$.}.
\end{equation}

The evaluation code $\mathcal{C}$ is defined as the image of the message space under evaluation over $\Omega$:
\begin{equation}
    \label{eq:main-code-construction}    
    \mathcal{C} = \mathcal{C}(G,H; r,D) := \big\{ (f(\beta))_{\beta \in \Omega} : f \in \mathcal{W}_{r,D} \big\}.
\end{equation}

We collect a few fundamental properties of $\mathcal{C}$. The code length is $n$. Since every message polynomial has degree strictly less than $D \le n$, the evaluation map is injective. Thus, $\dim_\mathbb{F}(\mathcal{C}) = \dim_\mathbb{F}(\mathcal{W}_{r,D})$. Furthermore, since $\mathcal{C}$ is a linear subcode of the Reed--Solomon code of length $n$ and degree $D$, its minimum distance is at least $n - D + 1$.

\begin{proposition}
    The code $\mathcal{C}$ is a linear block code with parameters $[\,n,\ \dim_\mathbb{F}(\mathcal{W}_{r,D}),\ \ge n-D+1\,]_\mathbb{F}$.
\end{proposition}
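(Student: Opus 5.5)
The plan is to verify the three parameters in turn: linearity, length and dimension, and the distance bound. Throughout we use only the definition of $\mathcal{W}_{r,D}$ in \eqref{eq:subspace-def} and the fact that $|\Omega| = n$.

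\emph{Linearity.} First we check that $\mathcal{W}_{r,D}$ is an $\mathbb{F}$-linear subspace of $\mathbb{F}[X]$. Uniqueness of the base-$u$ expansion makes the expansion map $f \mapsto (c_i)_i$ linear. Hence $\deg_u(\lambda f) \le \deg_u(f)$ and $\deg_u(f+f') \le \max\{\deg_u(f),\deg_u(f')\}$, the latter as already noted after the definition. Ordinary degree obeys the same two rules. So each of the three defining conditions cuts out a subspace, and $\mathcal{W}_{r,D}$ is their intersection. The evaluation map $f \mapsto (f(\beta))_{\beta\in\Omega}$ is $\mathbb{F}$-linear, so its image $\mathcal{C}$ is a linear subspace of $\mathbb{F}^{\Omega} \cong \mathbb{F}^n$. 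Its length is $|\Omega| = |A| = n$, by the regularity of the $A$-action established in Section~\ref{Algebraic_Setup}.

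\emph{Dimension.} Next we show that evaluation is injective on $\mathcal{W}_{r,D}$. Suppose $f \in \mathcal{W}_{r,D}$ vanishes on all of $\Omega$. Then $f$ has at least $n$ distinct roots. But $\deg f < D \le n$, so $f = 0$. Therefore $\dim_\mathbb{F}\mathcal{C} = \dim_\mathbb{F}\mathcal{W}_{r,D}$.

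\emph{Distance.} For a nonzero codeword, the underlying $f$ is nonzero with $\deg f \le D-1$. It therefore vanishes on at most $D-1$ points of $\Omega$, so the codeword has weight at least $n-D+1$. Equivalently, $\mathcal{C}$ is contained in the Reed--Solomon code of length $n$ and dimension $D$ on $\Omega$, which is MDS.

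There is no real obstacle here. The only point requiring care is the linearity of $\deg_u$-sublevel sets, and this follows directly from uniqueness of the base-$u$ expansion. Lemma~\ref{lem:base-u-subadditive} is not needed for this proposition.
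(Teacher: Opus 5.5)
Your proposal is correct and matches the paper's argument: evaluation is injective because $\deg f<D\le n=|\Omega|$, which gives the dimension, and $\mathcal{C}$ sits inside the Reed--Solomon code of dimension $D$ on $\Omega$, which gives distance at least $n-D+1$. Your explicit check that $\mathcal{W}_{r,D}$ is a subspace, via linearity of the unique base-$u$ expansion, simply fills in the step the paper's footnote calls straightforward.
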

Our construction naturally admits two complementary decoding viewpoints. In the principal case $D = n$, the code is a Tanner/expander code and therefore inherits the standard iterative (combinatorial) decoding paradigm of expander codes. In the more general regime $D \leq n$, the code remains a subcode of the $D = n$ construction and thus still supports iterative decoding; however, it additionally admits an algebraic decoding algorithm, as it can be viewed as a subcode of a Reed--Solomon code. 

Thus, the parameter $D$ interpolates between two structural perspectives on the same construction: a combinatorial one, based on expander decoding, and an algebraic one, based on global polynomial constraints.

\begin{remark}
\label{rem:counting-bound}
    We emphasize that we are primarily interested in the regime where the local rate satisfies $r \le 1/2$. In this case, establishing a lower bound on the global rate is non-trivial. In contrast, if the local rate satisfies $r > 1/2$, a standard dimension counting argument  suffices to show that the global rate of the code is at least $(2r-1)D/n -o(1)$.

    Indeed, it suffices to show that the dimension of the subspace $\mathcal{W}_{r,D}$ is at least $(2r-1)D-2$. 
    It is easy to verify that $\mathcal{W}_{r,D}=U\cap V$, where
    \[
    \begin{aligned}
    U&=\Span\big(X^ig(X)^j:\deg(X^ig(X)^j)<D,\ i<r\abs{G} \big), \\    V&=\Span\big(X^ih(X)^j:\deg(X^ih(X)^j)<D,\ i<r\abs{H} \big),
    \end{aligned}
    \]
    and 
    $\dim(U),\dim(V)\geq \lfloor Dr\rfloor\geq Dr-1$.  Therefore,
    $$D\geq \dim(U+V)=\dim(U)+\dim(V)-\dim(U\cap V)\geq 2rD-2-\dim(\mathcal{W}_{r,D}).$$
    By rearranging the claim follows. 
\end{remark}

\subsection{Local Codes}
\label{sec:local-codes}
Since $\Omega$ is $A$-invariant, it is partitioned into orbits under the actions of both $G$ and $H$. We will show that the restriction of any codeword to each such orbit lies in a Reed--Solomon code of small rate. Equivalently, every code coordinate participates in two local Reed--Solomon constraints (one induced by a $G$-orbit and one induced by an $H$-orbit), each of small rate.

Let $\beta \in \Omega$. The orbit of $\beta$ under $G$, denoted $\beta^G$, has size $|G|$. As noted in Section \ref{Algebraic_Setup}, $g(x)$ takes a constant value, say $\lambda=g(\beta)$, on $\beta^G$. 
Let $f \in \mathcal{W}_{r,D}$ be a message polynomial. Expanding $f$ in $g$-base, we have $f(X) = \sum_i c_i(X) g(X)^i$. 
Restricting $f$ to the domain $\beta^G$, this becomes:
\[
    f|_{\beta^G}(x) = \sum_i c_i(x) \lambda^i.
\]
The condition $\deg_g(f) < r|G|$ implies that $\deg(c_i) < r|G|$ for all $i$. Therefore, the restriction of the codeword to $\beta^G$ lies in a Reed--Solomon code
of length $|G|$ and dimension at most $\lceil r|G|\rceil$. Equivalently, its local rate
is at most $\lceil r|G|\rceil/|G|=r+o(1)$, and its local relative distance is at least $1-r-o(1)$.

A symmetric argument applies to the subgroup $H$. Since $\deg_h(f) < r|H|$ and $h(x)$ is constant on $\beta^H$, the restriction of the codeword to any orbit $\beta^H$ belongs to a Reed--Solomon code of length $|H|$ and rate at most $r+o(1)$.

\subsection{Bipartite Expander-Code Viewpoint}
\label{subsec:expander-view}

The locality structure described above induces an underlying graph topology. In this section, we formalize this viewpoint by identifying the code coordinates (the elements of $\Omega$) with the edges of a bipartite graph, and the local constraints with its vertices.

More specifically, each local constraint corresponds to the restriction of a codeword to an orbit of $\Omega$ under the action of $G$ or $H$. We associate a vertex with each such orbit, and we associate an edge with each code coordinate in $\Omega$. An edge (an element of  $\Omega$) is incident to the two vertices corresponding to the unique $G$-orbit and $H$-orbit that contain it. This naturally yields a bipartite graph whose left and right vertex sets are the $G$-orbits and $H$-orbits in $\Omega$, respectively.

Next, we view this graph as a coset graph of the subgroups $G$ and $H$ within $A$. Since the action of $A$ on $\Omega$ is regular, we may identify the code coordinates (and hence the graph edges) with the elements of $A$ via the bijection $\iota:A\to\Omega$ given by
\begin{equation}
\label{eq:iota_map}
\iota(\phi) = \phi(\alpha).
\end{equation}

To determine the vertex sets, we identify which subsets of coordinates correspond to single local constraints. Fix an affine map $\phi\in A$ and let $P=\phi(\alpha)$. The $G$-orbit of $P$ is
\[
\mathrm{Orb}_G(P) \;=\; \{\, g(P) : g\in G \,\}
\;=\;
\{\, g(\phi(\alpha)) : g\in G \,\}.
\]
Since the group operation in $A$ is composition, $g(\phi(\alpha))=(g\circ\phi)(\alpha)$. Hence, under the identification $\iota$, the $G$-orbit corresponds to the set
$
G\phi \;:=\; \{\, g\phi : g\in G \,\}$, 
namely the right coset of $G$ in $A$ containing $\phi$. Consequently, the vertices corresponding to $G$-orbits are naturally indexed by the set of right cosets
\[
A/G \;=\; \{\,G\phi : \phi\in A\,\}.
\]
An identical argument shows that the vertices corresponding to $H$-orbits are indexed by the set of right cosets $A/H$.

This leads to the following graph definition.

\paragraph{Graph definition.}
We define the bipartite graph $\mathsf{B}(G,H)$ using the coset geometry of $A$ as follows:
\begin{itemize}
  \item \textbf{Left vertices ($V_L$):} the set of right cosets of $G$ in $A$, i.e.,
  $
  V_L := \{\,G\phi : \phi\in A\,\}.$
  
  \item \textbf{Right vertices ($V_R$):} the set of right cosets of $H$ in $A$, i.e.,
  $
  V_R := \{\,H\phi : \phi\in A\,\}.
  $
  \item \textbf{Edges:} the edge set is identified with the group $A$. For each $\phi\in A$, we add an edge connecting the left vertex $G\phi$ to the right vertex $H\phi$.
\end{itemize}

The graph is biregular. Indeed, each left vertex has degree $|G|$ and each right vertex has degree $|H|$. Equivalently, $\mathsf{B}(G,H)$ is the standard coset incidence graph between $A/G$ and $A/H$.

When $D = n$ the code $\mathcal{C}$ is the Tanner code on $\mathsf{B}(G,H)$ with the following local constraints:
\begin{itemize}
    \item For every left vertex $u=G\phi\in V_L$, the incident edges are indexed by the coset $G\phi$, and the restriction of a codeword to these edges must belong to a Reed--Solomon code of length $|G|$ and dimension at most $\lceil r|G|\rceil$.
  \item For every right vertex $v=H\phi\in V_R$, the incident edges are indexed by the coset $H\phi$, and the restriction of a codeword to these edges must belong to a Reed--Solomon code of length $|H|$ and dimension at most $\lceil r|H|\rceil$.
\end{itemize}
For general $D \leq n$, the code $\mathcal{C}$ is a subcode of this Tanner code.

This representation enables the use of standard expander-code tools: bounds on the minimum distance and efficient decoding can be derived from spectral expansion of $\mathsf{B}(G,H)$, quantified by the second-largest singular value of its (normalized) adjacency operator.

\begin{remark}
    
    Note that the graph is not a multigraph (i.e., it is a simple graph) if and only if $G$ and $H$ intersect only trivially. Indeed, consider the two cosets $Ha,Ga$ that contain the element $a\in A.$ Then the number of edges between the corresponding vertices $Ha$ and $Ga$ is equal to $|Ha\cap Ga|=|H\cap G|$ and the claim follows.  
\end{remark}

\section{Spectral Expansion of the Coset Graph}
\label{sec:graph-analysis}
To prove that the constructed code is indeed an expander code, it suffices to show that the associated coset graph is expanding.

In this section we analyze the spectral properties of the coset (bipartite) graph $\mathsf{B}(G,H)$ arising from the construction. In particular, we derive an upper bound on the second largest singular value of its normalized bi-adjacency operator (equivalently, the second largest eigenvalue in absolute value of the normalized adjacency matrix of the bipartite graph). This spectral bound quantifies the expansion of $\mathsf{B}(G,H)$, and will allow us to lower bound the relative minimum distance of the global code in terms of the expansion of the graph and the minimum distances of the local codes.

By abuse of notation, we let $\mathsf{B}$ also denote the rectangular bi-adjacency matrix indexed by the set of left vertices $A/G$ and right vertices $A/H$, where the entry $(Ga, Hb)$ equals the number of edges connecting the cosets, i.e., $|Ga \cap Hb|$.

We consider the \emph{normalized} adjacency operator $T: \ell_2(A/H) \to \ell_2(A/G)$ defined by:
\[
    T := \frac{1}{\sqrt{|G||H|}}\mathsf{B}.
\]
$\sigma_2(T)$ is the second largest singular value of the normalized bi-adjacency operator, and equals the second largest eigenvalue in absolute value of the normalized adjacency matrix
$\begin{pmatrix}
0 & T \\
T^* & 0 
\end{pmatrix}
$.
It is well known that the largest singular value is $\sigma_1(T) = 1$, corresponding to the constant eigenfunction $\mathbf{1}$. Thus, $\sigma_2(T)$ satisfies:
\[
    \sigma_2(T) = \sup_{\substack{f \perp \mathbf{1} \\ \|f\|=1}} \|Tf\|.
\]
To bound $\sigma_2(T)$, we analyze the eigenvalues of the self-adjoint operator $T^* T$, where it holds  that $\sigma_2(T)^2 = \lambda_2(T^* T)$.

\subsection{Spectral Analysis via Random Walks}

The operator $T^* T$ acts on the space $\ell_2(A/H)$. The entry $(T^*T)_{Hj, Hj'}$ corresponds to the probability weight of a path of length 2 starting at $Hj$ and ending at $Hj'$ in the bipartite graph. Specifically,
\[
    (T^*T)_{Hj,Hj'} = \frac{1}{|G||H|} \sum_{Gi \in A/G} |Gi \cap Hj| \cdot |Gi \cap Hj'|.
\]
For a function $f: A/H \to \mathbb{C}$, the action of the operator is given by:
\begin{equation}
    \label{eq:operator-action}
    (T^*T f)(Hj) = \sum_{Hj'\in A/H} (T^*T)_{Hj,Hj'} f(Hj')
    = \sum_{Hj'\in A/H}
    \frac{1}{|G||H|} \sum_{Gi \in A/G} |Gi \cap Hj| \cdot |Gi \cap Hj'|
    f(Hj').
\end{equation}
This summation can be interpreted as a random walk on the graph. Starting from a right coset $Hj$, the walk proceeds as follows:
\begin{enumerate}
    \item Choose a random edge incident to $Hj$. This corresponds to picking a random $h \in H$ and moving to the element $x = hj$.
    \item Traverse to the left vertex containing $x$, which is $G(hj)$.
    \item Choose a random edge incident to $G(hj)$. This corresponds to picking a random $g \in G$ and moving to the element $y = g(hj)$.
    \item Traverse to the right vertex containing $y$, which is $H(ghj)$.
\end{enumerate}
Thus, the operator \eqref{eq:operator-action} acts as an averaging operator over the $H$-cosets:
\begin{equation}
    \label{eq:avg-operator}
    (T^* T f)(Hj) = \mathbb{E}_{\substack{g \in G \\ h \in H}} \big[ f(Hghj) \big].
\end{equation}

\subsection{Analysis for Translation and Scaling Subgroups}
In this section, we analyze the operator \eqref{eq:avg-operator} for specific groups $H,G$, as follows.

\begin{itemize}
    \item Let $H \subseteq \mathbb{F}^{\times}$ be a multiplicative subgroup. We identify $H$ with the group of scaling maps $\{x \mapsto hx \mid h \in H\}$.
    \item Let $G \subseteq (\mathbb{F}, +)$ be an additive subgroup. We identify $G$ with the group of translations $\{x \mapsto x+g \mid g \in G\}$.
\end{itemize}
Before proceeding to characterize the group $A=\langle G,H\rangle$ generated by $G$ and $H$, we note that for our choice of $G$ and $H$ the resulting graph is simple. Indeed, $G$ and $H$ intersect trivially, $
G\cap H=\{\mathrm{id}\}$,
since the identity map is the only affine transformation that is simultaneously a translation and a scaling.

\begin{theorem}
\label{thm:structural-theorem}
Let $S$ be the additive closure of $G$ under the action of $H$, defined as,
\[
    S = \sum_{h \in H} hG = \left\{ \sum_{k} h_k g_k \;\middle|\; h_k \in H, g_k \in G \right\} \subseteq \mathbb{F}.
\]
Then the group $A$ is isomorphic to the semidirect product $S \rtimes H$. Specifically:
\begin{enumerate}
    \item The elements of $A$ are exactly the affine maps $\{ x \mapsto hx + s \mid h \in H, s \in S \}$.
    \item The group operation corresponds to $(s_1, h_1) \cdot (s_2, h_2) = (s_1 + h_1 s_2, h_1 h_2)$.
\end{enumerate}
\end{theorem}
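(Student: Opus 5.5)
The plan is to show a double inclusion between $A=\langle G,H\rangle$ and the set $M:=\{x\mapsto hx+s : h\in H,\ s\in S\}$. Once this is done, the group law in part 2 follows by composing affine maps.

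\textbf{Step 1: $M$ is a subgroup of $\mathrm{AGL}(1,\mathbb{F})$.} Write $\tau_{s,h}(x)=hx+s$. Composition gives
\[
\tau_{s_1,h_1}\circ\tau_{s_2,h_2}(x)=h_1(h_2x+s_2)+s_1=h_1h_2x+(s_1+h_1s_2).
\]
This is exactly the rule $(s_1,h_1)\cdot(s_2,h_2)=(s_1+h_1s_2,h_1h_2)$ claimed in part 2. For closure we need $h_1s_2\in S$ whenever $h_1\in H$ and $s_2\in S$. This holds because $S$ is, by definition, an additive subgroup closed under multiplication by $H$: if $s_2=\sum_k h_kg_k$, then $h_1s_2=\sum_k (h_1h_k)g_k$, and $h_1h_k\in H$.

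The same fact handles inverses. We have $\tau_{s,h}^{-1}=\tau_{-h^{-1}s,\,h^{-1}}$, and $-h^{-1}s\in S$ since $S$ is an $H$-stable additive group. The identity is $\tau_{0,1}$. Also, $S$ is indeed an additive subgroup: sums of such expressions are again such expressions, and negatives lie in $S$ because $G$ is closed under negation.

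\textbf{Step 2: $A\subseteq M$.} The translations $\tau_{g,1}$ with $g\in G\subseteq S$ lie in $M$, and so do the scalings $\tau_{0,h}$. Since $M$ is a group by Step 1, it contains the group they generate, which is $A$.

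\textbf{Step 3: $M\subseteq A$.} This is the only real content of the argument. Conjugating a translation by a scaling gives
\[
(\tau_{0,h}\circ\tau_{g,1}\circ\tau_{0,h}^{-1})(x)=h(h^{-1}x+g)=x+hg.
\]
So every translation by $hg$ lies in $A$. Composing translations adds their parameters, so all translations by finite sums $\sum_k h_kg_k$ lie in $A$, that is, all translations by elements of $S$. Finally, $\tau_{s,h}=\tau_{s,1}\circ\tau_{0,h}$ lies in $A$.

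\textbf{Step 4: the semidirect product.} Steps 2 and 3 give $A=M$, which is part 1. The bijection $(s,h)\mapsto\tau_{s,h}$ from $S\times H$ to $A$ is well defined and injective, since an affine map determines its slope $h$ and its constant term $s$. Step 1 shows that this bijection carries the semidirect product multiplication for the action of $H$ on $S$ by multiplication to composition in $A$. Hence $A\cong S\rtimes H$.

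I expect no serious obstacle. The only care needed is to confirm that $S$ is closed under the $H$-action and under negation, which is what makes $M$ a group, and to keep the order of composition consistent with the stated product rule.
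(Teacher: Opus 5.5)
Your proposal is correct and follows essentially the same route as the paper: you show the set of maps $x\mapsto hx+s$ is a group (using that $S$ is an $H$-stable additive subgroup), deduce $A\subseteq M$ because $M$ contains $G$ and $H$, and obtain $M\subseteq A$ by conjugating translations by scalings to realize $x\mapsto x+hg$. The only cosmetic difference is that you factor $\tau_{s,h}=\tau_{s,1}\circ\tau_{0,h}$ where the paper uses $\phi_{0,h}\circ\phi_{h^{-1}s,1}$; both are valid.
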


\begin{proof}
See Appendix \ref{proofs-section-3}.
\end{proof}

The following lemma gives another characterization of the set $S$ that will be useful later. 
\begin{lemma}
\label{lem:S-span}
For $H$ and $G$ as above, the set $S
= \sum_{h\in H} hG
$ satisfies 
$S = \Span_{\mathbb F_p(H)}(G)$
\end{lemma}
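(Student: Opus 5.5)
We prove the two inclusions separately. The key observation is that $\mathbb{F}_p(H)$, the subfield of $\mathbb{F}$ generated by the elements of $H$, has a simple description. Since $H$ is a finite multiplicative group, it is closed under multiplication and contains $1$. Hence the additive closure
\[
R:=\Big\{\textstyle\sum_k n_k h_k : n_k\in\mathbb{F}_p,\ h_k\in H\Big\}
\]
is a subring of $\mathbb{F}$ containing $\mathbb{F}_p$ and $H$. It is a finite integral domain, being a subring of a field, and therefore a field. It follows that $R=\mathbb{F}_p(H)$: every element of $\mathbb{F}_p(H)$ is an $\mathbb{F}_p$-linear combination of elements of $H$. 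This step is the only nontrivial point, and it uses finiteness, or equivalently the fact that inverses of elements of $H$ lie in $H$.

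For the inclusion $S\subseteq \Span_{\mathbb{F}_p(H)}(G)$, we use that each $h_k$ lies in $\mathbb{F}_p(H)$. Each term $h_kg_k$ of an element $\sum_k h_kg_k\in S$ is therefore an $\mathbb{F}_p(H)$-multiple of an element of $G$. A finite sum of such terms lies in the span, which proves the inclusion.

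For the reverse inclusion, take an element $\sum_j c_j g_j$ with $c_j\in\mathbb{F}_p(H)$ and $g_j\in G$. By the description above, each $c_j$ can be written as $\sum_k n_{jk}h_{jk}$ with $n_{jk}\in\mathbb{F}_p$ and $h_{jk}\in H$. Then
\[
c_jg_j=\sum_k h_{jk}(n_{jk}g_j).
\]
Since $G$ is an additive subgroup of a field of characteristic $p$, it is closed under multiplication by elements of $\mathbb{F}_p$, which act as repeated addition. Thus $n_{jk}g_j\in G$, so each term $h_{jk}(n_{jk}g_j)$ lies in $hG$ for some $h\in H$. The element $\sum_j c_jg_j$ is therefore a finite sum of elements of the sets $hG$, and so it lies in $S$. (Here $S$ is read as the set of finite sums, which is additively closed, as in its definition in Theorem~\ref{thm:structural-theorem}.) The two inclusions give the equality.
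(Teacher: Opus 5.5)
Your proof is correct and is essentially the paper's argument: both identify $\mathbb{F}_p(H)$ with the $\mathbb{F}_p$-span of $H$ via the finite-integral-domain argument and then check the two inclusions. The only cosmetic difference is that for $\Span_{\mathbb{F}_p(H)}(G)\subseteq S$ you expand each coefficient term by term, whereas the paper shows $S$ is closed under multiplication by $\mathbb{F}_p(H)$ and contains $G$.

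One small caveat concerns a side remark. Finiteness is not ``equivalent'' to $H$ being closed under inverses: in $\mathbb{F}_p(X)$ the $\mathbb{F}_p$-span of $\langle X\rangle$ is $\mathbb{F}_p[X,X^{-1}]$, which is not a field. Your argument only uses finiteness, so nothing is affected.
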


\begin{proof}
    See Appendix \ref{proofs-section-3}.
\end{proof}

Following Theorem \ref{thm:structural-theorem}, we identify the right cosets of $H$ in $A$ with the elements of the subspace $S$. Specifically, the coset containing $(s,1)$ is $H(s,1) = \{(0,h)(s,1) : h \in H\} = \{(hs, h) : h \in H\}$.
This identification allows us to view the operator $T^* T$ as acting on the space $\mathbb{C}^S$. Given $f \in \ell_2(A/H)$, we define $f' \in \mathbb{C}^S$ by $f'(s) := f(H(s,1))$.

We trace the random walk in \eqref{eq:avg-operator} starting from the coset $H(s,1)$:
\[
    H(s,1) \xrightarrow{x \in H(s,1)} Gx \xrightarrow{y \in Gx} Hy.
\]
A generic element in $H(s,1)$ is $(hs, h)$ for a random $h \in H$. This lies in the right coset $G(hs, h) = \{(g+hs, h) : g \in G\}$. A generic element in this right coset is $(g+hs, h)$.
We must identify the right coset $H(s', 1)$ containing $(g+hs, h)$.
Since $(g+hs, h) = (0,h)(h^{-1}(g+hs), 1) = (0,h)(h^{-1}g + s, 1)$, the new state is $s' = s + h^{-1}g$.
Thus, the operator acts on $\mathbb{C}^S$ as:
\begin{equation}
    \label{eq:conv-operator}
    (T^* T f')(s) = \mathbb{E}_{\substack{g \in G \\ h \in H}} \left[ f'(s + h^{-1}g) \right].
\end{equation}

Next, for an arbitrary subset $M\subseteq \mathbb{F}$, define its trace-dual subspace by
\[
M^\perp \;:=\; \{\,a\in \mathbb{F} : \Tr(am)=0 \text{ for all } m\in M\,\},
\]
where $\Tr=\Tr_{\mathbb{F}/\mathbb{F}_p}$ denotes the field trace from $\mathbb{F}$ to its prime subfield $\mathbb{F}_p$.
For an additive subgroup (subspace) $S\le (\mathbb{F},+)$, the additive characters of $S$ are indexed by cosets in $\mathbb{F}/S^\perp$. For $a\in \mathbb{F}$ define
\[
\chi_a(s)\;:=\;\exp\!\left(\frac{2\pi i}{p}\Tr(as)\right)\qquad\text{for all } s\in S.
\]
Note that $\chi_a$ depends only on the coset $a+S^\perp$, since if $a'\equiv a \pmod{S^\perp}$ then $\Tr((a-a')s)=0$ for all $s\in S$, and hence $\chi_{a'}(s)=\chi_a(s)$.

These characters form an eigenbasis for the operator $T^*T$. Indeed, fixing a character $\chi_a$ and substituting into \eqref{eq:conv-operator}, we obtain
\[
(T^*T\,\chi_a)(s)
=
\mathbb{E}_{g,h}\big[\chi_a(s+h^{-1}g)\big]
=
\chi_a(s)\cdot \mathbb{E}_{g,h}\big[\chi_a(h^{-1}g)\big],
\]
where we used the homomorphism property $\chi_a(x+y)=\chi_a(x)\chi_a(y)$.
This implies that the eigenvalue of $\chi_a$ is $\lambda_a = \mathbb{E}_{h \in H} \mathbb{E}_{g \in G} [\chi_a(h^{-1}g)]$.
 The inner expectation is $1$ if $h^{-1}a \in G^\perp$ and $0$ otherwise.
 Thus the eigenvalue $\lambda_a$ equals
\[
    \lambda_a = \Pr_{h \in H} (h^{-1}a \in  G^\perp).
\]
The second largest singular value $\sigma_2(T)$ corresponds to the square root of the second largest eigenvalue. We conclude that,
\begin{equation}
\label{eq:sigma_2_characterization}
    \sigma_2(T) = \sqrt{\max_{a \in \mathbb{F}\setminus S^\perp} \Pr_{h \in H} (h^{-1}a \in  G^\perp)}.
\end{equation}

\subsection{Explicit Bounds via Character Sums}
In this section we bound $\sigma_2(T)$ via character-sum estimates. More specifically, we derive an explicit upper bound on $\sigma_2(T)$ in terms of additive character sums over the multiplicative subgroup $H$.

\begin{proposition}
\label{prop:general-bound}
    Let 
    \[
        M := \max_{a\in \mathbb{F}\setminus H^\perp } \left| \sum_{h \in H} \chi_a(h) \right|,
    \]
    Then, the second largest singular value of the graph satisfies
    \[
        \sigma_2(T) \le \sqrt{ \frac{1}{p} +  \frac{M}{|H|} }.
    \]
\end{proposition}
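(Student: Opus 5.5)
The plan is to start from the characterization \eqref{eq:sigma_2_characterization}: it suffices to show that for every $a\in\mathbb{F}\setminus S^\perp$,
\[
\Pr_{h\in H}\big(h^{-1}a\in G^\perp\big)\;\le\;\frac1p+\frac{M}{|H|}.
\]
Since $G$ is an additive subgroup (an $\mathbb{F}_p$-subspace), I will use orthogonality of characters to write the indicator of $G^\perp$ as $\mathbf 1[x\in G^\perp]=\mathbb{E}_{g\in G}\chi_x(g)$. This is exactly the identity already used in deriving $\lambda_a$.

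\textbf{Step 1 (swap the averages).} From the indicator identity,
\[
\Pr_{h}\big(h^{-1}a\in G^\perp\big)=\mathbb{E}_{h\in H}\mathbb{E}_{g\in G}\chi_a(h^{-1}g)=\mathbb{E}_{g\in G}\Big[\frac{1}{|H|}\sum_{h\in H}\chi_{ag}(h)\Big].
\]
Here I use $\chi_a(h^{-1}g)=\chi_{ag}(h^{-1})$, and I reindex $h\mapsto h^{-1}$, which is a bijection of the group $H$.

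\textbf{Step 2 (split the $g$'s).} Let $G_a:=\{g\in G: ag\in H^\perp\}$. For $g\in G_a$ the inner average equals $1$. For $g\notin G_a$ we have $ag\in\mathbb{F}\setminus H^\perp$, so by the definition of $M$ the inner average has absolute value at most $M/|H|$. Since the left-hand side is real and nonnegative, taking absolute values gives
\[
\Pr_h\big(h^{-1}a\in G^\perp\big)\le \frac{|G_a|}{|G|}+\frac{M}{|H|}.
\]

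\textbf{Step 3 (the key point: $G_a$ is a proper subgroup).} The condition $ag\in H^\perp$ is $\mathbb{F}_p$-linear in $g$, because $\Tr(a(g+g')h)=\Tr(agh)+\Tr(ag'h)$. Hence $G_a$ is an $\mathbb{F}_p$-subspace of $G$.

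Suppose $G_a=G$. Then $\Tr(a\,hg)=0$ for all $h\in H$ and $g\in G$. By additivity of the trace, $\Tr(as)=0$ for every finite sum $s=\sum_k h_kg_k$, which means $a\in S^\perp$ by the definition of $S$ in Theorem~\ref{thm:structural-theorem}. This contradicts $a\notin S^\perp$.

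So $G_a$ is a proper subspace, and its index in $G$ is a power of $p$ greater than $1$. Hence $|G_a|\le |G|/p$. Substituting into Step~2 and taking square roots in \eqref{eq:sigma_2_characterization} gives the claimed bound on $\sigma_2(T)$.

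I expect Step~3 to be the only non-routine part: it is where the hypothesis $a\notin S^\perp$ enters, and where the term $1/p$ in the bound comes from. The remaining steps are bookkeeping with characters.
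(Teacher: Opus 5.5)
Your proof is correct and follows the paper's argument essentially step for step. You Fourier-expand the indicator of $G^\perp$ and swap the averages. You then isolate the $\mathbb{F}_p$-subspace $G_a = G\cap (Ha)^\perp$, which must be proper because otherwise $a\in S^\perp$; this yields the $1/p$ term, and you bound the remaining inner sums by $M$ via the triangle inequality.
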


\begin{proof}
   By \eqref{eq:sigma_2_characterization} we seek to bound $\lambda_a = \frac{1}{|H|} |\{ h \in H : h^{-1}a \in G^\perp \}|$ for $a \in \mathbb{F}\setminus S^\perp$.
    We express the indicator function of the subspace $G^\perp$ using its Fourier expansion over $\mathbb{F}$ 
    
    \[
        \mathbf{1}_{G^\perp}(y) = \frac{1}{|G|} \sum_{g \in G} \chi_g(y).
    \]   
    Then,
    \begin{equation}
        \label{eq:lambda_bound}
        \lambda_a = \frac{1}{|H|} \sum_{h \in H} \mathbf{1}_{G^\perp}(h^{-1}a) = \frac{1}{|H|} \sum_{h \in H} \frac{1}{|G|} \sum_{g \in G} \chi_g(h^{-1}a) = \frac{1}{|G|} \sum_{g \in G} \left( \frac{1}{|H|} \sum_{h \in H} \chi_g(h^{-1}a) \right).
    \end{equation}
    Note that $G\cap (H\cdot a)^\perp$ is a proper $\mathbb{F}_p$ subspace of $G$. Otherwise $G\subseteq (H\cdot a )^\perp$, which would imply that $a\in S^\perp$. 
    For each $g\in G\cap (H\cdot a)^\perp$ the inner sum is $1$, and there are at most $\abs{G}/p$ such $g$'s, therefore, by the triangle inequality, the RHS of \eqref{eq:lambda_bound} is at most 
    $$\frac{1}{p}+\frac{1}{\abs{G}}\sum_{g\in G\setminus (H\cdot a)^\perp} \frac{1}{|H|} \abs{\sum_{h \in H} \chi_g(h^{-1}a)}= 
    \frac{1}{p}+\frac{1}{\abs{G}}\sum_{ga\in G\cdot a\setminus H^\perp} \frac{1}{|H|} \abs{\sum_{h \in H} \chi_{ga}(h^{-1})}
    \leq \frac{1}{p}+\frac{M}{|H|}.$$
    Since $\sigma_2(T) = \sqrt{\max_{a\in \mathbb{F}\setminus S^\perp  }\lambda_a}$, the result follows.
\end{proof}

\section{First Code Instantiation}
\label{sec:main-construction}
In this section, we present our first construction of algebraic expander codes as a concrete instantiation of the general framework established in Section~\ref{sec:general}. We show that for any desired target local rate, this instantiation yields an infinite family of asymptotically good codes with strong spectral expansion, whose local constraints are given by relatively short Reed--Solomon codes. 

However, the alphabet size grows linearly with the block length, and the maximum degree of the underlying graph grows polynomially in the number of vertices.

\subsection{Code Instantiation}
\label{sec:first-code}
Let $m \ge 2$ be a fixed integer, and let $p$ be a prime. We consider the asymptotic regime where $m$ is fixed and $p \to \infty$. 

We apply the general construction $\mathcal{C}(G, H; r, D)$ given in Section \ref{sec:code-def} with the following parameters.

\begin{itemize}
    \item \textbf{Additive Subgroup ($G$):} Let $g(X) = X^{p^m} + X^p + X$. We define $G$ to be the set of roots of $g(X)$ in its splitting field. Since $g(X)$ is a linearized polynomial with $\gcd(g, g')=1$, $G$ is an abelian group of order $|G| = p^m$.
    We identify $G$ with the group of translations $\{x\mapsto x+a:a\in G\}$, and then by \eqref{eq:additive-subgroup-poly} its associated $G$-invariant polynomial is $g(X)$.
    \item \textbf{Multiplicative Subgroup ($H$):} Let $H = \mathbb{F}_{p^m}^\times$ be the  multiplicative group of order $p^m-1$. 
   We identify $H$ with the  group of scaling maps 
   $\{x\mapsto hx: h\in H\}$, and then by \eqref{eq:mult-subgroup-poly} its associated $H$-invariant polynomial is $h(X)=X^{\abs{H}}$.
\end{itemize}

\paragraph{Structure and Code Length.}
Let $A = \langle G, H \rangle$ be a subgroup of 
$\mathrm{AGL}(1,\mathbb{F})$ and recall that we can assume that  $\mathbb{F}$ is a large enough field extension of $\mathbb{F}_p$ such that  $\abs{\mathbb{F}}\geq \abs{A}.$

By Theorem \ref{thm:structural-theorem}, the group $A$ is isomorphic to $S \rtimes H$, where $S = \sum_{h\in H}hG$. The following lemma  shows that $\abs{S}=p^{m^2}$, and therefore the code length is
\[
    n = |A| = |S| \cdot |H| = p^{m^2}(p^m - 1)= p^{m^2+m}(1-o(1)),
\]
where $o(1)$ tends to zero as $p\rightarrow \infty.$

\begin{lemma}
\label{lem:moore}
Let $G$ and $H$, as defined above, be additive and multiplicative subgroups of the field $\mathbb{F}$, respectively. Then the set $S=\sum_{h\in H}hG$ has size $p^{m^2}$.
\end{lemma}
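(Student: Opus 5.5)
The plan is to reduce the count to a linear-independence statement over $\mathbb{F}_q$, where $q:=p^m$, and then settle it with Moore determinants. By Lemma~\ref{lem:S-span}, $S=\Span_{\mathbb{F}_p(H)}(G)$. Since $H=\mathbb{F}_q^\times$, we have $\mathbb{F}_p(H)=\mathbb{F}_q$, so $S$ is the $\mathbb{F}_q$-span of $G$.

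First I fix an $\mathbb{F}_p$-basis $g_1,\dots,g_m$ of $G$. This is possible because $G$ is an $\mathbb{F}_p$-subspace of dimension $m$, being the root set of the separable $p$-linearized polynomial $g(X)=X^{q}+X^p+X$. Then $S=\Span_{\mathbb{F}_q}(g_1,\dots,g_m)$, so $|S|\le q^m=p^{m^2}$. Equality holds if and only if $g_1,\dots,g_m$ are linearly independent over $\mathbb{F}_q$, and proving this is the whole content of the lemma.

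To prove the independence I will use the Moore criterion: elements $x_1,\dots,x_m$ of an extension of $\mathbb{F}_Q$ are $\mathbb{F}_Q$-linearly independent if and only if
\[
\det\bigl(x_i^{Q^j}\bigr)_{1\le i\le m,\ 0\le j\le m-1}\neq 0 .
\]
Applied with $Q=p$, this criterion shows $\det(g_i^{p^l})\neq 0$, since the $g_i$ are $\mathbb{F}_p$-independent. The goal is then to show that the $\mathbb{F}_q$-Moore matrix $(g_i^{q^j})$ has the same determinant up to sign.

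The key computation uses the defining relation of $G$. Every $x\in G$ satisfies $x^{q}=-x-x^{p}=P(x)$, where $P(X):=-(X+X^p)$ is $p$-linearized with coefficients in $\mathbb{F}_p$. Iterating this $j$ times gives $x^{q^j}=P^{\circ j}(x)$. Because composition of $p$-linearized polynomials with $\mathbb{F}_p$-coefficients corresponds to multiplication of the associated ordinary polynomials in $\mathbb{F}_p[T]$ (with $X^{p^l}\leftrightarrow T^l$), we get
\[
x^{q^j}=(-1)^j\sum_{l=0}^{j}\binom{j}{l}x^{p^l}.
\]
Since $j\le m-1$, only the powers $x^{p^0},\dots,x^{p^{m-1}}$ appear, so there is no wraparound. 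Consequently
\[
\bigl(g_i^{q^j}\bigr)_{i,j}=\bigl(g_i^{p^l}\bigr)_{i,l}\cdot B,
\qquad B_{l,j}=(-1)^j\binom{j}{l}\bmod p,
\qquad 0\le l,j\le m-1 .
\]
The matrix $B$ is upper triangular with diagonal entries $(-1)^j\binom{j}{j}=\pm1$, so $\det B=\pm1$. Hence $\det(g_i^{q^j})=\pm\det(g_i^{p^l})\neq0$. By the Moore criterion over $\mathbb{F}_q$, the $g_i$ are $\mathbb{F}_q$-independent, and therefore $|S|=q^m=p^{m^2}$.

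The only delicate step is the triangular change of basis between the two Moore matrices. The specific shape $X^{q}+X^p+X$ is what makes it work: it lets $q$-th powers be rewritten as $\mathbb{F}_p$-combinations of lower $p$-powers with an invertible leading coefficient. I would check the identity $x^{q^j}=P^{\circ j}(x)$ and the exponent bound $p^{j}\le p^{m-1}$ carefully, and confirm that $\det B=\pm1$ holds even when $m>p$. This is fine, because only the diagonal entries $\binom{j}{j}=1$ matter for the determinant of a triangular matrix.
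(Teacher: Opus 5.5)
Your proof is correct, but it reaches the key fact (an $\mathbb{F}_p$-basis of $G$ stays independent over $\mathbb{F}_q$, with $q=p^m$ as in your write-up) by a different route than the paper.

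The paper argues by descent. It takes a shortest $\mathbb{F}_q$-dependence $\sum_{i\le t}\alpha_i v_i=0$ normalized by $\alpha_1=1$ and applies the $\mathbb{F}_p$-linear map $g$. Using $\alpha_i^{q}=\alpha_i$ and $g(v_i)=0$, everything cancels except $\sum_{i\ge 2}(\alpha_i^p-\alpha_i)v_i^p=0$. Taking $p$-th roots gives a shorter $\mathbb{F}_q$-relation, so minimality forces every $\alpha_i\in\mathbb{F}_p$, contradicting $\mathbb{F}_p$-independence.

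You instead compare the two Moore matrices. Iterating $x^q=-(x+x^p)$ on $G$ via the composition/multiplication correspondence gives $(g_i^{q^j})=(g_i^{p^l})B$, with $B$ upper triangular with diagonal entries $\pm 1$. Applying the Moore criterion over $\mathbb{F}_p$ and then over $\mathbb{F}_q$ finishes the proof.

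The trade-offs are as follows.
\begin{itemize}
\item \emph{The paper's argument} is shorter and self-contained: it needs only additivity of $g$ and bijectivity of Frobenius. It relies on the fact that $g(\alpha x)-\alpha g(x)=(\alpha^p-\alpha)x^p$ is a single monomial for $\alpha\in\mathbb{F}_q$, i.e.\ on the specific trinomial shape of $g$.
\item \emph{Your argument} uses the Moore criterion as a black box. In return it gives the exact identity $\det(g_i^{q^j})=\pm\det(g_i^{p^l})$ and shows precisely what is used: Frobenius $\sigma$ acts on $G$ with $\sigma^{m}=-(1+\sigma)$, so $\sigma^{mj}$ is a combination of $\sigma^0,\dots,\sigma^{j}$ with unit top coefficient.
\end{itemize}

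Your computation also generalizes. For any separable $\mathbb{F}_p$-linearized $g$ whose $p$-associate $f(T)\in\mathbb{F}_p[T]$ has degree $m$, it gives $|S|=p^{m^2}$ if and only if $1,T^m,\dots,T^{m(m-1)}$ are linearly independent in $\mathbb{F}_p[T]/(f)$. For example, this correctly fails for $g=X^{q}-X$, where $S=\mathbb{F}_q$.
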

\begin{proof}
    See Appendix \ref{Proof of Lemma moore}.
\end{proof}

\subsection{Code Parameters}
In this section we analyze the rate and relative distance of the code. 
Recall that $m \ge 2$ is fixed. To simplify the analysis, we work in the asymptotic regime where     $p \to \infty$.

Fix $D\leq n$ and $r\in (0,1)$, and let 
$\mathcal{C}=\mathcal{C}(G, H;r,D)$ as in \eqref{eq:main-code-construction}.
 Define the normalized degree bound $\rho = \lim_{p\to\infty} D/n$, where clearly $\rho \in [0, 1]$.  

We first provide the parameters of the code in the important case of $D=n$, i.e., $\rho=1$. 

\begin{theorem}
\label{thm:optimized_construction2}
Let $m \ge 2$ and $r \in (0, 1)$ be fixed. The code $\mathcal{C}$ is an explicit family of expander codes of length $n$ which, as $p \to \infty$, has:
\begin{enumerate}
    \item Left and right local code lengths $|G| = p^m$ and $|H| = p^m-1$ (both scaling as $\Theta(n^{\frac{1}{m+1}})$), with local rate at most $r+o(1)$ and local relative distance at least $1-r-o(1)$.
    \item Global relative distance $\delta \ge (1-r)^2$.
    \item Global rate $R \ge \frac{r^{2m+1}}{(2m+1)!}$.
\end{enumerate}
\end{theorem}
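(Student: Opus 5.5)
The plan is to treat the three items separately. Item~1 is immediate, the distance bound comes from the spectral machinery of Section~\ref{sec:graph-analysis}, and the rate bound, the real content, comes from a lattice-point count via Lemma~\ref{lem:base-u-subadditive}.

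\emph{Item~1.} Here $|G|=p^m$ and $|H|=p^m-1$ hold by definition. Since $n=p^{m^2+m}(1-o(1))$ by Lemma~\ref{lem:moore}, both lengths are $\Theta(n^{1/(m+1)})$. The local rate and distance statements are exactly those of Section~\ref{sec:local-codes}.

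\emph{Item~2 (distance).} I would apply Proposition~\ref{prop:general-bound} with $H=\mathbb{F}_{p^m}^\times$. For $a\notin H^\perp$, the map $x\mapsto\chi_a(x)$ restricted to $\mathbb{F}_{p^m}$ is a nontrivial additive character of $\mathbb{F}_{p^m}$, because $H\cup\{0\}=\mathbb{F}_{p^m}$. By orthogonality,
\[
\sum_{h\in H}\chi_a(h)=\sum_{x\in\mathbb{F}_{p^m}}\chi_a(x)-1=-1,
\]
so $M=1$ and $\sigma_2(T)\le\sqrt{1/p+1/(p^m-1)}=O(1/\sqrt p)\to 0$. Then I would invoke the standard Tanner-code distance bound for biregular graphs (Barg--Z\'emor/Sipser--Spielman type, via the expander mixing lemma). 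A nonzero codeword's support is a set of edges in which every touched left vertex has relative degree at least $\delta_L$ and every touched right vertex has relative degree at least $\delta_R$. Mixing then gives relative weight at least $\delta_L\delta_R-O(\sigma_2)$, where $\delta_L,\delta_R\ge 1-r-o(1)$. Letting $p\to\infty$ yields $\delta\ge(1-r)^2$ asymptotically.

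\emph{Item~3 (rate).} With $D=n$, the code dimension equals $\dim\mathcal{W}_{r,n}$, and I would lower bound it by exhibiting many monomials $X^k$, $k<n$, that lie in $\mathcal{W}_{r,n}$. These are automatically independent since their degrees are distinct. Write $k=\sum_j k_jp^j$ in base $p$. By Lemma~\ref{lem:base-u-subadditive}, for $u\in\{g,h\}$,
\[
\deg_u(X^k)\le\sum_j k_j\deg_u(X^{p^j}).
\]
For $h=X^{p^m-1}$ one has $X^{p^j}=X^{p^{j\bmod m}}\cdot h^{\ast}$, so $\deg_h(X^{p^j})=p^{j\bmod m}$. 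For $g$, I would use $X^{p^m}=g-X^p-X$ and raise to $p^\ell$-th powers (Frobenius is additive) to get the recurrence
\[
X^{p^{m+\ell}}=g^{p^\ell}-X^{p^{\ell+1}}-X^{p^\ell}.
\]
Iterating this expresses each $X^{p^j}$ as a polynomial in $g$ plus a linearized polynomial of degree $<p^m$. That should bound $\deg_g(X^{p^j})$ by $p^{e(j)}$ for an explicit exponent $e(j)\le m-1$. The difficulty is that the iterated expansion also produces cross terms, namely products of $g$-powers with low $X^{p^i}$'s; these must be checked not to increase the coefficient degrees beyond $p^{e(j)}$.

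The membership condition then becomes two linear inequalities in the digits,
\[
\sum_j k_jp^{e(j)}<rp^m
\quad\text{and}\quad
\sum_j k_jp^{j\bmod m}<rp^m,
\]
with $0\le k_j<p$. The digits whose weights are much smaller than $p^m$ in both inequalities are essentially free and contribute a factor comparable to their full range. The remaining digits, those carrying weight $p^{m-1}$ in at least one constraint, are confined to a scaled polytope. After normalizing by $p$, that polytope should contain a $(2m+1)$-dimensional simplex of side $r$ with volume $r^{2m+1}/(2m+1)!$. The lattice-point count then approaches $p^{2m+1}$ times this volume as $p\to\infty$, and multiplying by the free-digit contribution and dividing by $n$ gives $R\ge r^{2m+1}/(2m+1)!$.

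The main obstacle I expect is the step above: computing $\deg_g(X^{p^j})$ precisely enough from the recurrence for all $j<m^2+m$, and then identifying exactly which digits end up constrained and matching them to a $(2m+1)$-dimensional simplex. If the pure-monomial family turns out too small, I would fall back on products $X^{a}g^{b}h^{c}$ and the same subadditivity-plus-volume argument.
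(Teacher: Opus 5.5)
\textbf{Items 1 and 2.} These are fine and match the paper: orthogonality gives $M=1$ in Proposition~\ref{prop:general-bound}, and the standard Tanner distance bound follows.

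\textbf{Item 3: the gap.} The failure is at the step you flagged. For the monomial family, the digits carrying weight $p^{m-1}$ do \emph{not} form a $(2m+1)$-dimensional simplex. Once $m\ge 3$, your count gives a strictly smaller rate than claimed.

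Your worry about cross terms is misplaced. Linearized polynomials over $\mathbb{F}_p$ compose like $\mathbb{F}_p[T]$, so $X^{p^j}=\sum_\ell q_\ell\, g^{p^\ell}+R_j(X)$ with constants $q_\ell$ and $R_j\leftrightarrow T^j \bmod (T^m+T+1)$. Hence $\deg_g(X^{p^j})=p^{e(j)}$ exactly, where $e(j)=\deg\bigl(T^j \bmod (T^m+T+1)\bigr)$.

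The real problem is that $e(j)=m-1$ for many $j$. Since $T^{km+i}\equiv(-1)^k(T+1)^kT^i$, within the $k$-th block the degree already reaches $m-1$ at $i=m-1-k$ and typically stays there. Take $m=3$:
\begin{itemize}
\item the dominant $g$-digits are $B:=\{j<12: e(j)=2\}=\{2,4,5,6,7,9,10,11\}$;
\item the dominant $h$-digits are $A=\{2,5,8,11\}$.
\end{itemize}
So nine digits are constrained. Integrating over $s=y_2+y_5+y_{11}$, the limiting polytope $\{\sum_{j\in A}y_j<r,\ \sum_{j\in B}y_j<r\}$ has volume
\[
\int_0^r\frac{s^2}{2}\,(r-s)\,\frac{(r-s)^5}{5!}\,ds=\frac{r^9}{60480}=\frac{r^2}{12}\cdot\frac{r^7}{7!}<\frac{r^7}{7!}.
\]
The argument happens to work for $m=2$, where $|A\cup B|=5$. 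As written, though, the plan does not prove Item 3.

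\textbf{The missing idea.} Choose the spanning family so that the $g$-constraint holds automatically, leaving a single linear constraint. The paper uses $g(X)^iX^j$ with $0\le j<rp^m$. Its $g$-adic expansion is just $X^j\cdot g^i$, so $\deg_g=j<r|G|$ for free.

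On the $h$-side, $g=Xh+X^p+X$ and $X^{p^m}=Xh$ give $\deg_h(g^{p^k})=p,p^2,\dots,p^{m-1},p^{m-1}$, periodic in $k \bmod m$. So among the $m^2$ base-$p$ digits of $i<p^{m^2}$, exactly $2m$ carry the dominant weight $p^{m-1}$. Together with $z=j/p^m$, this yields exactly the simplex $\sum_{k\in I_{\mathrm{dom}}}x_k+z<r$ in $2m+1$ variables, of volume $r^{2m+1}/(2m+1)!$.

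Your fallback to products $X^ag^bh^c$ points in this direction. It has to be the main argument, though, with the high-order part of the degree carried by the power of $g$ rather than by a monomial.
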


The proof of Theorem~\ref{thm:optimized_construction2} follows from the next theorem by setting $\rho=1.$

Next, we give the parameters of the code in the general case of $D\leq n$. 

\begin{theorem}
\label{thm:optimized_construction}
Let $m \ge 2$ be fixed. Let $r \in (0, 1)$ and $\rho \in (0, 1]$. The code $\mathcal{C}$ is an explicit family of expander codes of length $n$ which, as $p \to \infty$, has:
\begin{enumerate}
    \item Left and right local code lengths $|G| = p^m$ and $|H| = p^m-1$ (both scaling as $\Theta(n^{\frac{1}{m+1}})$), with local rate at most $r+o(1)$ and local relative distance at least $1-r-o(1)$.
    \item Global relative distance $\delta \ge \max(1-\rho, (1-r)^2)$.
    \item Global rate 
    \[ R \ge \frac{1}{(C+1)!} \left(r^{C+1} - \max(0, r-\rho)^{C+1}\right), \] 
    where $C=2m$.
\end{enumerate}
\end{theorem}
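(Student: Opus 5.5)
The proof handles the three items separately: items 1 and 2 follow quickly from results already in the paper, and item 3 (the rate) is the real work.

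\textbf{Item 1 (local codes).} The lengths $|G|=p^m$ and $|H|=p^m-1$ are immediate. Lemma~\ref{lem:moore} gives $n=p^{m^2}(p^m-1)$, so both lengths are $\Theta(n^{1/(m+1)})$. The claims on local rate and local distance are exactly Section~\ref{sec:local-codes}.

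\textbf{Item 2 (distance).} The bound $1-\rho$ comes from the Proposition, since $\delta\ge (n-D+1)/n\to 1-\rho$. For $(1-r)^2$ I use the standard distance bound for Tanner codes on a biregular graph (cf.\ \cite{barg2006distance}): $\delta\ge \delta_0(\delta_0-\sigma_2(T))$ with $\delta_0=1-r-o(1)$. It then suffices to show $\sigma_2(T)=o(1)$, and for this I apply Proposition~\ref{prop:general-bound}. With $H=\mathbb{F}_{p^m}^\times$ we have
\[
\sum_{h\in H}\chi_a(h)=\sum_{x\in\mathbb{F}_{p^m}}\chi_a(x)-1 .
\]
For $a\notin H^\perp=\mathbb{F}_{p^m}^\perp$, the character $\chi_a$ is nontrivial on $\mathbb{F}_{p^m}$, so by orthogonality the full sum vanishes. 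Hence $M=1$ and $\sigma_2(T)\le\sqrt{1/p+1/(p^m-1)}\to 0$, which gives $\delta\ge(1-r)^2-o(1)$.

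\textbf{Item 3 (rate).} Write $q=p^m$, so $h=X^{q-1}$ and $g=X^q+X^p+X$. I would exhibit an explicit large family of products in $\mathcal{W}_{r,D}$. They are built from $2m+1$ building blocks, each of which has a small base-degree with respect to one of $g$ or $h$:
\begin{itemize}
\item $X$, with $\deg_g=\deg_h=1$;
\item $u_k:=g^{p^k}=X^{qp^k}+X^{p^{k+1}}+X^{p^k}$ for $0\le k<m$, with $\deg_g u_k=0$. Since $X^{qp^k}=X^{p^k}h^{p^k}$, we get $\deg_h u_k\le p^{k+1}$;
\item $w_k:=X^{qp^k}=u_k-X^{p^{k+1}}-X^{p^k}$, with $\deg_g w_k\le p^{k+1}$ and $\deg_h w_k=p^k$.
\end{itemize}
For a monomial $f=X^a\prod_k u_k^{b_k}w_k^{c_k}$, Lemma~\ref{lem:base-u-subadditive} gives two linear constraints:
\[
\deg_g f\le a+\sum_k c_kp^{k+1}, \qquad \deg_h f\le a+\sum_k (b_kp^{k+1}+c_kp^k).
\]
We also need $\deg f=a+\sum_k (b_k+c_k)qp^k<D$. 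Thus every lattice point $(a,b,c)\in\mathbb{Z}_{\ge0}^{2m+1}$ with
\[
a+\sum_k c_kp^{k+1}<rq,\qquad a+\sum_k(b_kp^{k+1}+c_kp^k)<r(q-1),\qquad \deg f<D
\]
yields an element of $\mathcal{W}_{r,D}$.

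Next I would rescale the variables so that each block contributes a comparable amount. Set $a=xq$, $b_k=y_k q/p^{k+1}$, $c_k=z_k q/p^{k+1}$. Then the number of lattice points is $\approx \prod(\text{scales})\cdot \mathrm{vol}(P)$, where $P$ is the rescaled polytope. The product of the scales is $q^{2m+1}/p^{m(m+1)}\approx q^{m+1}p^{m^2-m}\cdot(\ldots)$, and this should match $n\approx q^{m+1}$ up to lower-order factors. I expect that, after discarding lower-order terms as $p\to\infty$, $P$ reduces to a simplex-type region $\{x+\sum y_k+\sum z_k<r\}$ cut by the normalized degree constraint $<\rho$. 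Its volume is
\[
\frac{r^{C+1}-\max(0,r-\rho)^{C+1}}{(C+1)!},\qquad C=2m,
\]
which is the claimed rate. The key steps to verify here are that the scalings actually cancel to give a constant fraction of $n$, and that the constraints collapse to a single simplex.

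\textbf{Linear independence.} Finally I would show that distinct exponent tuples give linearly independent polynomials, for instance by checking that their leading degrees $a+\sum(b_k+c_k)qp^k$ are distinct over the admissible range. If they are not, I would pass to a maximal subfamily with distinct leading terms and argue that only a negligible fraction is lost. This last step, together with the exact bookkeeping of the scales in the volume count, is where I expect the main obstacle to lie.
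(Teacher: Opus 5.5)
Your items 1 and 2 coincide with the paper's argument. Item 3, however, cannot work as set up: as stated it yields rate $0$.

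\textbf{Why the family is too small.} Your own two base-degree constraints force $\sum_k b_kp^{k+1}<rq$ (from the $h$-constraint) and $\sum_k c_kp^{k+1}<rq$ (from the $g$-constraint). Hence every admissible product has $\deg f=a+(q/p)\sum_k(b_k+c_k)p^{k+1}<rq+2rq^2/p=O(p^{2m-1})$. The span of your whole family therefore has dimension $O(p^{2m-1})$, whereas $n=p^{m^2}(p^m-1)\approx p^{m^2+m}$.

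The bookkeeping confirms this. The product of your scales is $q\prod_{k<m}(q/p^{k+1})^2=p^{m^2}\approx n/p^{m}$, not $\Theta(n)$; your value $q^{m+1}p^{m^2-m}$ is an arithmetic slip. The number of distinct leading degrees is smaller still, because $u_k$ and $w_k$ have the same degree and $b_k,c_k$ are not restricted to base-$p$ digits.

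For the same reason the constraint $\deg f<D\approx\rho n$ never binds, so no $\rho$-dependence can arise. Even the rescaled region is not a single simplex. It is $\{x+\sum_k y_k<r\}\cap\{x+\sum_k z_k<r\}$, since the $g$- and $h$-budgets are separate.

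\textbf{The missing idea.} Frobenius powers of $g$ stay cheap in $h$-base degree for \emph{all} $k$, not just $k<m$. Because $X^{p^m}=X\,h(X)$, one has $X^{p^{am+b}}=X^{p^b}h(X)^{e}$ for some integer $e$. So $W_k=\deg_h(g^{p^k})$ is periodic in $k$ with period $m$, taking the values $p,p^2,\dots,p^{m-1},p^{m-1}$, while $\deg g^{p^k}=p^{k+m}$ grows up to about $n$. Your family stops at $k<m$, before this wrap-around ever happens.

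The paper takes $g(X)^iX^j$ with $0\le j<rq$ and $i=\sum_{k<m^2}i_kp^k$, $0\le i_k<p$. Then:
\begin{enumerate}
\item $\deg_g(g^iX^j)=j$ exactly.
\item The degrees $iq+j$ are automatically distinct, so linear independence is free.
\item By subadditivity, the $h$-constraint divided by $q$ reads $z+\sum_{k\in I_{\mathrm{dom}}}x_k<r$ up to $o(1)$. Here $I_{\mathrm{dom}}$ is the set of $2m$ indices with $k\bmod m\in\{m-2,m-1\}$.
\item The remaining $m^2-2m$ digits cost less than $1/p$ each after normalization, so they are asymptotically free.
\item The $(i,j)$ range has size $p^{m^2}\cdot p^m\approx n$.
\item The global constraint $iq+j<D$ becomes $x_{m^2-1}<\rho$, with $m^2-1\in I_{\mathrm{dom}}$.
\end{enumerate}
Together these give exactly the volume $\int_0^{\min(r,\rho)}(r-y)^{2m}/(2m)!\,dy$.
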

\begin{proof}
The claims about the local rate and relative distance follow from the code construction and the fact that the local codes are RS codes. We proceed to establish the bounds on the global rate and relative distance. 

\paragraph{Distance Bound.}
We utilize both the algebraic definition of the code and the combinatorial properties of its underlying graph.

\textit{Algebraic Bound:} The global degree constraint ensures that $\deg(f) < D\leq n$ for all $f \in \mathcal{W}(r, D)$. Then, since the evaluation map is injective, the relative distance is bounded by $\delta \ge 1 - D/n$. Asymptotically, this implies that $\delta \ge 1-\rho$.

\textit{Combinatorial Bound:} View the code as an expander code with local relative distance $\delta_0 = 1-r$. In Theorem~\ref{thm:spectral-bound} (below) we show that the coset graph $\mathsf{B}(G,H)$ is a strong spectral expander; in particular, the normalized second largest singular value of its bi-adjacency matrix satisfies $\lambda(\mathsf{B}) = O(1/\sqrt{p})$. Hence, the global relative distance of the code satisfies
$$ \delta \ge \delta_0 (\delta_0 - \lambda(\mathsf{B})). $$
As $p \to \infty$, $\lambda(\mathsf{B}) \to 0$, yielding the asymptotic bound $\delta \ge (1-r)^2$.

Combining these yields $\delta \ge \max(1-\rho, (1-r)^2)$.

\paragraph{Rate Bound.}
The rate analysis is slightly more technical, as we need to lower bound  the dimension of the subspace $\mathcal{W}_{r,D}$. Towards this end we show that many polynomials of the form $g(X)^iX^j$ have small $h$-base degree. 

We begin first with the analysis of the degree of $g(X)$ in $h$-base . Note that $g(X) = X h(X) + X^p + X$, and therefore $\deg_h(g)=p.$ Next, we  establish the $h$-base degree of $g$ under  the Frobenius automorphism.

\begin{lemma}
\label{lem:h_degree_of_g_frobenius}
Let $k \ge 0$ be an integer. Then,
\[
W_k:=\deg_h(g^{p^k}) =
\begin{cases}
    p^{(k \mod m) + 1} &  k \mod m \neq m-1, \\
    p^{m-1} & k \mod m = m-1.
\end{cases}
\]
\end{lemma}

\begin{proof}
Write $k = am + b$, where $0 \leq b < m$. Let $Q = \frac{p^{am}-1}{p^m-1}$. We use the identity $X^{p^{am}} = X \cdot h(X)^Q$. Thus, $X^{p^k} = (X^{p^{am}})^{p^b} = X^{p^b} h(X)^{Q p^b}$.

We compute $g(X)^{p^k} = (X h(X) + X^p + X)^{p^k} = X^{p^k} h(X)^{p^k} + X^{p^{k+1}} + X^{p^k}$.
Substituting the expressions for $X^{p^k}$:
\begin{align*}
g(X)^{p^k} &= \left(X^{p^b} h(X)^{Q p^b}\right) h(X)^{p^k} + \left(X^{p^{b+1}} h(X)^{Q p^{b+1}}\right) + \left(X^{p^b} h(X)^{Q p^b}\right) \\
&= X^{p^b} \left( h(X)^{Q p^b + p^k} + h(X)^{Q p^b} \right) + X^{p^{b+1}} h(X)^{Q p^{b+1}}.
\end{align*}

We analyze the degrees of the coefficients in the $h$-base expansion.

\textbf{Case 1:} $b < m-1$. The degrees of the coefficients are $p^b$ and $p^{b+1}$. Since $p^{b+1} < p^m = \deg(h)+1$, these coefficients are already reduced. The maximum degree is $p^{b+1} = p^{(k \mod m) + 1}$.

\textbf{Case 2:} $b = m-1$. The degrees are $p^{m-1}$ and $p^m$. We must reduce the term $X^{p^m}$ using $X^{p^m} = X h(X) $ we get 
\begin{align*}
g(X)^{p^k} &= X^{p^{m-1}} \left( h(X)^{Q p^{m-1} + p^k} + h(X)^{Q p^{m-1}} \right) + (X h(X) ) h(X)^{Q p^m}.
\end{align*}
The degrees of the coefficients are $p^{m-1}$ and $1$. The maximum degree is $p^{m-1}$.
\end{proof}

Note that the weights $W_k $ defined in Lemma \ref{lem:h_degree_of_g_frobenius} for $k\geq 0$ form a periodic sequence of period length length $m$: $p^1, \dots, p^{m-2}, p^{m-1}, p^{m-1}$. Also, the dominant weight is $p^{m-1}$. 

Next, note that any polynomial of the form $g(X)^iX^j$ such  that
\begin{enumerate}
    \item $ 0\leq j < r\cdot \deg (g)=rp^m$
    \item $ \deg_h\!\big(g(X)^i X^j\big) < r\cdot \deg (h)=r(p^m-1)$ 
        \item $\deg\!\big(g(X)^i X^j\big)=ip^m+j < D$
\end{enumerate}
belongs to the subspace $\mathcal{W}_{r,D}$ \eqref{eq:subspace-def}. Next, we would like to bound the number of such polynomials. Note that all such polynomials are of distinct degree, and therefore linearly independent, which implies a lower bound on the dimension of $\mathcal{W}_{r,D}.$

Next, we define a simplified set of constraints 
\begin{enumerate}
      \item \textbf{Local Constraint:} $\sum_ki_kW_k + j < r (p^m-1)$, where $i=\sum_ki_kp^k$ is the $p$-adic expansion of $i.$
    \item \textbf{Global Constraint:} $\deg(g^i X^j) = i \cdot p^m + j=\sum_k i_kp^{k+m}+j < D$.
 
\end{enumerate}
Note that any polynomial $g(X)^iX^j$ that satisfies the local and global constraints also satisfies the three constraints $(1)-(3)$, above, and therefore they belong to $\mathcal{W}(r,D).$ Indeed, the global constraint is identical to the last constraint (3). Furthermore, if it satisfies the local constraint, then necessarily $j< rp^m$. Lastly, 
$$\deg_h(g(X)^iX^j)=
\deg_h\left(\prod_k \left(g(X)^{p^k}\right)^{i_k} X^j\right)
\leq \sum_ki_kW_k+j<r(p^m-1),$$
where the first inequality follows from the subadditivity of the $h$-base degree (Lemma \ref{lem:base-u-subadditive}) and Lemma \ref{lem:h_degree_of_g_frobenius}. 

Next, we lower bound the number of monomials that satisfy the local and global  constraints. 
The global degree constraint implies $i < D/p^m \approx \rho p^{m^2}$, restricting the $p$-adic expansion of $i$ to indices $0 \le k < m^2$.

 We define the normalized variables $x_k := i_k/p \in [0,1)$ and $z := j/p^m \ge 0$, and analyze the constraints asymptotically as $p\to\infty$.

\begin{enumerate}
\item \textbf{Asymptotic local constraint:}
Let $I_{\mathrm{dom}}$ be the set of indices corresponding to the dominant weight $p^{m-1}$, namely
\[
I_{\mathrm{dom}}
:=
\{\,0\le k<m^2 : W_k=p^{m-1}\,\},
\qquad\text{and } |I_{\mathrm{dom}}|=2m.
\]
Let $C:=2m$. Dividing the local constraint by $p^m$ and taking the limit $p\to\infty$ yields
\begin{equation}
\label{eq:asy-local-const}
\sum_{k\in I_{\mathrm{dom}}} x_k + z < r.
\end{equation}

\item \textbf{Asymptotic global constraint:}
Let $k_{\max}=m^2-1$. Dividing the global constraint by $p^{m^2+m}$ and taking the limit $p\to\infty$ gives
\begin{equation}
\label{eq:asy-global-const}
x_{k_{\max}} < \rho.
\end{equation}
\end{enumerate}

Note that $k_{\max}\in I_{\mathrm{dom}}$, since by Lemma~\ref{lem:h_degree_of_g_frobenius} we have
$W_{k_{\max}}=W_{m^2-1}=p^{m-1}$.
Thus, the asymptotic global constraint~\eqref{eq:asy-global-const} restricts a variable that also appears in the asymptotic local constraint~\eqref{eq:asy-local-const}.

To lower bound the dimension of $\mathcal{W}_{r,D}$, we count the number of valid discrete tuples $(j, \{i_k\})$. As $p \to \infty$, this discrete counting problem can be approximated by a continuous volume via a standard multidimensional Riemann-sum argument.
The $m^2-C$ subdominant variables $x_k$ (for $k \notin I_{\mathrm{dom}}$) vanish from the asymptotic constraints \eqref{eq:asy-local-const} and \eqref{eq:asy-global-const}. Thus, they are asymptotically unconstrained and contribute a factor of $1$ to the normalized volume.
The fraction of valid tuples therefore converges exactly to the continuous volume of the polytope $\mathcal{P}$ defined by the dominant variables $\{x_k\}_{k\in I_{\mathrm{dom}}}$ and $z$.
Since the total number of unconstrained tuples is $p^m \cdot p^{m^2} = p^{m^2+m}$, the number of admissible monomials is asymptotically $p^{m^2+m} \mathrm{Vol}(\mathcal{P})$.
 
The next lemma computes the volume of $\mathcal{P}$.

\begin{lemma}
\label{lem:polytope_volume}
The volume of the polytope $\mathcal{P}$ defined by constraints \eqref{eq:asy-local-const} and \eqref{eq:asy-global-const} is
$$ \text{Vol}(\mathcal{P}) = \frac{1}{(C+1)!} \left(r^{C+1} - \max(0, r-\rho)^{C+1}\right). $$
\end{lemma}

\begin{proof}
    See Appendix \ref{proof-lemma-polytope-volume}.
\end{proof}
Since the code length is $n = p^{m^2+m}(1-o(1))$, dividing the asymptotically number of admissible monomials  $p^{m^2+m}\mathrm{Vol}(\mathcal{P})$ by $n$ shows that the global rate $R$ as $p \to \infty$ is bounded below by $\mathrm{Vol}(\mathcal{P})$. Hence, by Lemma \ref{lem:polytope_volume}, we obtain that
\[
R \ge \frac{1}{(C+1)!} \left(r^{C+1} - \max(0, r-\rho)^{C+1}\right). 
\]
\end{proof}

From Theorem \ref{thm:optimized_construction} we identify three regions where the parameters of the code behave differently, depending on the relation between $r$ and $\rho$. We summarize it in the following corollary.

\begin{corollary}
\label{cor:optimal_tradeoff}
For a fixed $m \ge 2$, a  local rate $r \in (0, 1)$ and $\rho\in (0,1)$, the code $\mathcal{C}$ satisfies 
$$
\begin{aligned}
R &\ge
\begin{cases}
   \frac{r^{C+1} - (r-\rho)^{C+1}}{(C+1)!} & \rho\le r,\\
   \frac{r^{C+1}}{(C+1)!} & r<\rho.\\
\end{cases}\\
\delta &\ge
\begin{cases}
    1-\rho & \rho\le 2r-r^2,\\
     (1-r)^2 & 2r-r^2<\rho.
\end{cases}
\end{aligned}
$$
\end{corollary}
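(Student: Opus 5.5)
The corollary is a direct case analysis of the two bounds already proved in Theorem~\ref{thm:optimized_construction}. I will treat the rate and the distance separately, since each is a piecewise simplification of one expression from that theorem.

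\textbf{Rate.} Theorem~\ref{thm:optimized_construction} gives
\[
R\ge \frac{1}{(C+1)!}\left(r^{C+1}-\max(0,r-\rho)^{C+1}\right).
\]
I split on the sign of $r-\rho$.
\begin{itemize}
\item If $\rho\le r$, then $\max(0,r-\rho)=r-\rho\ge 0$. This yields the first case, $\bigl(r^{C+1}-(r-\rho)^{C+1}\bigr)/(C+1)!$.
\item If $r<\rho$, then $\max(0,r-\rho)=0$. The subtracted term vanishes (recall $C+1\ge 1$, so $0^{C+1}=0$), giving $r^{C+1}/(C+1)!$.
\end{itemize}
At $\rho=r$ both expressions agree, so placing the boundary case in either branch is harmless.

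\textbf{Distance.} Theorem~\ref{thm:optimized_construction} gives $\delta\ge\max(1-\rho,(1-r)^2)$. I compare the two terms:
\[
1-\rho\ \ge\ (1-r)^2 \iff \rho\le 1-(1-r)^2=2r-r^2 .
\]
\begin{itemize}
\item If $\rho\le 2r-r^2$, the maximum is $1-\rho$.
\item If $\rho>2r-r^2$, the maximum is $(1-r)^2$.
\end{itemize}
This matches the stated piecewise bound.

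The only point that needs care is making sure the boundary cases are assigned consistently. At $\rho=r$ and at $\rho=2r-r^2$ the two competing expressions coincide, so no bound is lost there. Since $r\in(0,1)$, we also have $2r-r^2\in(0,1)$, so the distance split is genuinely nontrivial within the range $\rho\in(0,1)$. Otherwise the argument is just the algebra above.
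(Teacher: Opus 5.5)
Your proposal is correct and takes the same route as the paper, which obtains the corollary directly from Theorem~\ref{thm:optimized_construction}: split on the sign of $r-\rho$ to resolve $\max(0,r-\rho)$, and on whether $\rho\le 1-(1-r)^2=2r-r^2$ to resolve $\max(1-\rho,(1-r)^2)$. Your treatment of the boundary cases $\rho=r$ and $\rho=2r-r^2$, where the two competing expressions coincide, is also fine.
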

By inspecting Corollary~\ref{cor:optimal_tradeoff}, we observe an unsatisfying behavior of our rate bound. Intuitively, one expects that increasing $\rho$ should increase the global rate $R$ of the code. However, once $\rho$ exceeds $r$, our lower bound on the rate no longer improves. This raises the question of whether the bound is simply not tight, and in fact the true rate is increasing in $\rho$ throughout the admissible range, or whether the global rate indeed does not increase beyond this point.
\subsection{Spectral Expansion}
In this section we establish the final ingredient needed to conclude that our code is an expander code, namely that the coset graph associated with the instantiation of Section~\ref{sec:first-code} is expanding. In particular, we prove that this graph is a strong spectral expander, with second singular value on the order of $O(1/\sqrt{p})$.

\begin{theorem}
\label{thm:spectral-bound}
The second largest singular value of the normalized adjacency operator of the coset graph associated with the code instantiation in Section \ref{sec:first-code}  satisfies,
    \[
        \sigma_2(T) \le \sqrt{\frac{1}{p}+\frac{1}{p^m-1}}.
    \]
\end{theorem}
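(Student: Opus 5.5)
The plan is to apply Proposition~\ref{prop:general-bound} with $H=\mathbb{F}_{p^m}^\times$. This reduces the claim to the estimate $M\le 1$, since then $\sigma_2(T)\le\sqrt{1/p+1/(p^m-1)}$. The whole argument therefore comes down to computing the additive character sum $\sum_{h\in\mathbb{F}_{p^m}^\times}\chi_a(h)$ for $a\in\mathbb{F}\setminus H^\perp$. The key observation is that $H\cup\{0\}=\mathbb{F}_{p^m}$ is an additive subgroup (indeed a subfield) of $\mathbb{F}$. So the sum over $H$ equals the sum over the full subfield minus the term at $0$:
\[
\sum_{h\in H}\chi_a(h)=\sum_{y\in\mathbb{F}_{p^m}}\chi_a(y)-1 .
\]

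Next I would use orthogonality of additive characters on the subgroup $\mathbb{F}_{p^m}$. The map $y\mapsto\chi_a(y)=\exp(2\pi i\Tr(ay)/p)$ is a character of $(\mathbb{F}_{p^m},+)$. It is trivial exactly when $\Tr(ay)=0$ for all $y\in\mathbb{F}_{p^m}$, that is, when $a\in\mathbb{F}_{p^m}^\perp$. Since $H^\perp$ is defined by the vanishing of $\Tr(ah)$ on $H$, and the value at $0$ vanishes automatically, we have $H^\perp=\mathbb{F}_{p^m}^\perp$.

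Hence for $a\notin H^\perp$ the character is nontrivial on $\mathbb{F}_{p^m}$, and the full sum vanishes. This gives $\sum_{h\in H}\chi_a(h)=-1$, so $M=1$. Plugging $M=1$ and $|H|=p^m-1$ into Proposition~\ref{prop:general-bound} gives the claimed bound.

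The main point to double check is the bookkeeping in Proposition~\ref{prop:general-bound}: the maximum there is over $a\notin H^\perp$, and the proof indeed only invokes $M$ for elements $ga\notin H^\perp$. I would also verify that the term $1/p$ comes from the proper-subspace argument, which needs nothing specific to this instantiation. No genuine obstacle is expected; this instantiation was chosen precisely so that orthogonality replaces Gauss-sum estimates.
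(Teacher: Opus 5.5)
Your proposal is correct and follows the paper's proof: both apply Proposition~\ref{prop:general-bound} and use orthogonality of additive characters on $\mathbb{F}_{p^m}=H\cup\{0\}$ to get $\bigl|\sum_{h\in H}\chi_a(h)\bigr|=|0-\chi_a(0)|=1$, hence $M=1$. Your explicit check that $H^\perp=\mathbb{F}_{p^m}^\perp$, so the restricted character is nontrivial, is a small justification the paper leaves implicit.
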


\begin{proof}
 The result will follow by 
 Proposition \ref{prop:general-bound}, and by showing that  
 \[
        \max_{a\in \mathbb{F}\setminus H^\perp } \left| \sum_{h \in H} \chi_a(h) \right|=1.
    \] 
 Indeed, for any $a\in \mathbb{F}\setminus H^\perp$, orthogonality of additive characters over $\mathbb{F}_{p^m}$ gives
$$
\sum_{h\in \mathbb{F}_{p^m}}\chi_a(h)=0,
$$
so
$$
\abs{\sum_{h\in H}\chi_{a}(h)}
=\abs{\sum_{h\in \mathbb{F}_{p^m}}\chi_{a}(h)-\chi_a(0)}
=1.
$$
 \end{proof}
\section{Second Code Instantiation}
\label{sec:second-instantiation}
In this section, we give another instantiation of the code construction from Section~\ref{sec:general}. 
Recall that in the first instantiation (Section~\ref{sec:first-code}) the two degrees in the graph are as close as possible (they differ by one). Moreover, it exploits the choice $H=\mathbb{F}_{p^m}^\times$, which yields an especially clean and strong character-sum bound and hence excellent spectral expansion. 
However, this choice is relatively rigid, as it essentially fixes the right degree to be of the form $|H|=p^m-1$, since $H$ must be the full multiplicative group of a field.

Furthermore, the field over which the code is defined must contain the splitting field of the linearized polynomial $g(X)$ that defines $G$, which can be large.

In this section we present a second instantiation that addresses these two caveats, thereby exemplifying the flexibility of the general construction. 
In particular, in this instantiation we take $G=\mathbb{F}_{p^m}$ (hence algebraically simpler than in the first instantiation) and choose $H$ to be a large cyclic subgroup of $\mathbb{F}_{p^{m+1}}^\times$ of constant index $1/\gamma$.
This allows us to tune the right degree $|H|=\gamma(p^{m+1}-1)$ while still obtaining strong spectral expansion via standard Gauss-sum estimates for multiplicative subgroups.
The result is an explicit expander-code family with the same qualitative distance guarantees, together with a parameter $\gamma$ that controls the locality/degree tradeoff on the $H$-side.

We proceed with the code instantiation.
\subsection{Code Instantiation}
\label{sec:second-code-instan}
Fix an integer $m\ge 2$ and let $p$ be a prime that we later let tend to infinity.
Let $q:=p^{m+1}$.
We apply the general construction $\mathcal{C}(G, H; r, D)$ given in Section \ref{sec:code-def} with the following parameters.

\begin{itemize}
    \item \textbf{Additive Subgroup:} Let $G:=\mathbb F_{p^m}$.

    \item \textbf{Multiplicative Subgroup:}
Let $\gamma\in (0,1)$ be a reciprocal of a fixed positive integer such that $\gamma(p-1)$ is an integer\footnote{An infinite sequence of such primes $p$  exists. Indeed,  if $\gamma=1/a$ for some fixed positive integer $a$, then by 
Dirichlet’s theorem on primes in arithmetic progressions there are  infinitely many primes 
$p\equiv 1 \pmod{a}$. For such primes, clearly $\gamma(p-1)$ is an integer.
}, and let $H\le \mathbb F_q^\times$ be the unique cyclic subgroup of order
\[
|H| = \gamma\cdot (p^{m+1}-1).
\]
Note that since $\gamma(p-1)$ is an integer, the size of $|H|$ is also an integer.
\end{itemize}
 As before, we identify $G$ with the translation group
$\{x\mapsto x+g : g\in G\}$ and $H$ with the scaling group
$\{x\mapsto hx : h\in H\}$. The associated invariant polynomials are therefore
\[
g(X)=\prod_{a\in \mathbb F_{p^m}} (X-a)=X^{p^m}-X, \text{ and }
h(X)=X^{|H|}.
\]

\paragraph{Structure and Code Length.}
Let $\mathbb{F}$ be a large enough field extension of $\mathbb{F}_p$ such that the group 
$A = \langle G, H \rangle$ is a subgroup of 
$\mathrm{AGL}(1,\mathbb{F})$, and $\abs{\mathbb{F}}\geq \abs{A}.$
Let 
$
S:=\sum_{h\in H} hG$ be the additive 
closure of $G$ under the action of $H$. 
The following lemma characterizes $S.$

\begin{lemma}
\label{lem:S-second}
The set $S$ satisfies \[
S = \mathbb F_{p^{m(m+1)}}.
\]
\end{lemma}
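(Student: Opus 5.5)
By Lemma~\ref{lem:S-span}, $S=\Span_{\mathbb F_p(H)}(G)$. The plan is to identify the field $\mathbb F_p(H)$ and then compute the span of $G$ over it.

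First, $H\le\mathbb F_q^\times$ with $q=p^{m+1}$, so $\mathbb F_p(H)\subseteq \mathbb F_{p^{m+1}}$. Equality will come from the index of $H$. Any proper subfield of $\mathbb F_{p^{m+1}}$ is $\mathbb F_{p^{d}}$ with $d\mid m+1$ and $d\le (m+1)/2$, so its multiplicative group has order at most $p^{(m+1)/2}-1$. This is smaller than $|H|=\gamma(p^{m+1}-1)$ once $p$ is large, because $\gamma$ is a fixed constant. Hence $H$ lies in no proper subfield, and $\mathbb F_p(H)=\mathbb F_{p^{m+1}}$.

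Next, $\Span_{\mathbb F_{p^{m+1}}}(\mathbb F_{p^m})$ is exactly the set of finite sums $\sum a_k b_k$ with $a_k\in\mathbb F_{p^{m+1}}$ and $b_k\in\mathbb F_{p^m}$. This set is the compositum ring $\mathbb F_{p^{m+1}}[\mathbb F_{p^m}]$, since it is closed under products. A finite integral domain is a field, and the compositum of $\mathbb F_{p^{m+1}}$ and $\mathbb F_{p^m}$ inside $\overline{\mathbb F}_p$ is $\mathbb F_{p^{\mathrm{lcm}(m,m+1)}}=\mathbb F_{p^{m(m+1)}}$, because $\gcd(m,m+1)=1$. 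So $S=\mathbb F_{p^{m(m+1)}}$. As a cross-check on dimensions: $\mathbb F_{p^m}$ and $\mathbb F_{p^{m+1}}$ are linearly disjoint over $\mathbb F_p$ (their intersection is $\mathbb F_p$ and the extensions are Galois), so a basis of $\mathbb F_{p^m}$ over $\mathbb F_p$ stays independent over $\mathbb F_{p^{m+1}}$. That gives $|S|=q^m=p^{m(m+1)}$.

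The only delicate step is showing that $H$ generates all of $\mathbb F_{p^{m+1}}$. This uses $p$ large relative to $1/\gamma$, which holds in the asymptotic regime $p\to\infty$. I expect this largeness condition to be part of the statement's implicit hypotheses.
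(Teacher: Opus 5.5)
Your proof is correct and follows the paper's route: you reduce via Lemma~\ref{lem:S-span} to $\Span_{\mathbb{F}_p(H)}(G)$, show $\mathbb{F}_p(H)=\mathbb{F}_{p^{m+1}}$ by a size argument, and identify the span with $\mathbb{F}_{p^{m(m+1)}}$ (the paper does this last step with the linear-disjointness dimension count you give as a cross-check, and your compositum/finite-integral-domain argument is an equally valid alternative). The largeness condition you flagged is not an extra hypothesis: since $\gamma=1/a$ with $a\mid p-1$, one has $|H|=(p^{m+1}-1)/a\ge (p^{m+1}-1)/(p-1)=1+p+\cdots+p^m>p^m$ for every admissible $p$, so $H$ fits in no proper subfield of $\mathbb{F}_{p^{m+1}}$ --- this is what justifies the paper's brief remark that the smallest power of $p$ that is at least $|H|$ is $p^{m+1}$.
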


\begin{proof}
Note that the smallest power of $p$ that is at least $\abs{H}$ is $p^{m+1}$. Therefore,  since $H\subset \mathbb{F}_{p^{m+1}}$ it holds that $\mathbb{F}_p(H)=\mathbb{F}_{p^{m+1}}.$
Then, by Lemma \ref{lem:S-span} 
\[
S=\Span_{\mathbb{F}_p(H)}(G)
=\mathrm{Span}_{\mathbb F_{p^{m+1}}}(G)=
\mathrm{Span}_{\mathbb F_{p^{m+1}}}(\mathbb{F}_{p^m})=\mathbb{F}_{p^{m(m+1)}},\]
where the last equality follows since $\gcd(m,m+1)=1$, $\mathbb{F}_{p^m}\cap \mathbb{F}_{p^{m+1}}=\mathbb{F}_p$.
Hence any $\mathbb{F}_p$-basis of $\mathbb{F}_{p^m}$ is $\mathbb{F}_{p^{m+1}}$-linearly independent, so
$\dim_{\mathbb{F}_{p^{m+1}}}\left(\Span_{\mathbb{F}_{p^{m+1}}}(\mathbb{F}_{p^m})\right)=m$.
But $[\mathbb{F}_{p^{m(m+1)}}:\mathbb{F}_{p^{m+1}}]=m$, so this span is an
$m$-dimensional $\mathbb{F}_{p^{m+1}}$-subspace of $\mathbb{F}_{p^{m(m+1)}}$, and therefore equals the whole field.
\end{proof}
Recall that the code length $n=|A|=|S||H|$, where 
$
A=\langle G,H\rangle \cong S\rtimes H.
$
Hence, by Lemma \ref{lem:S-second}
\[
n=|S||H| = p^{m(m+1)}\cdot \gamma\cdot (p^{m+1}-1)= \gamma\cdot p^{(m+1)^2}\cdot (1-o(1)),
\]
where $o(1)$ tends to zero as $p\to\infty$.
\begin{remark}
For concreteness, we specify the  field over which the code is defined. Let
$
\mathbb{F} = \mathbb{F}_{p^{2m(m+1)}},
$
so that $|\mathbb{F}| \ge |A|$ and since $S,H\subseteq \mathbb{F}$, $A$ is as a subgroup of $\mathrm{AGL}(1,\mathbb{F})$. 
\end{remark}
Choose an element $\alpha \in \mathbb{F}$ that is not stabilized by any nontrivial affine transformation in $A$,  and let
\[
\Omega := \{ a(\alpha) : a \in A \} \subseteq \mathbb{F},
\]
be the orbit of $\alpha$ under the  action of $A$. We take this orbit as the evaluation set of the code.

We conclude that  the resulting code has length
$
n  = \Theta\!\left(p^{(m+1)^2}\right)
$, 
over a field of size $|\mathbb{F}| = p^{2m(m+1)}$.

\subsection{Code Parameters}
Fix $r\in(0,1)$ and $D\le n$, and let $\rho=\lim_{p\to\infty} D/n\in(0,1)$.
Let $\mathcal{C}=\mathcal{C}(G,H;r,D)$ be the code from~\eqref{eq:main-code-construction}.

We first provide the parameters of the code in the important case of $D=n$, i.e., $\rho=1$.

\begin{theorem}
\label{thm:params-second2}
The code $\mathcal{C}$ is an explicit family of expander codes of length $n$ which, as $p \to \infty$, has:
\begin{enumerate}
    \item Left local code length $|G|=p^m = \Theta(n^{\frac{m}{(m+1)^2}})$ and right local code length $|H|=\gamma (p^{m+1}-1) = \Theta(n^{\frac{1}{m+1}})$, with local rate at most $r+o(1)$ and local relative distance at least $1-r-o(1)$.
    \item Global relative distance $\delta \ge (1-r)^2$.
    \item Global rate bounded below by \[ R\ge \frac{\gamma^{2m}\, r^{2m+2}}{(2m+1)!}. \]
\end{enumerate}
\end{theorem}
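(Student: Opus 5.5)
The plan follows the proof of Theorem~\ref{thm:optimized_construction} specialized to $\rho=1$, with three components: local parameters, distance via spectral expansion, and rate via counting monomials $g(X)^iX^j$.

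\textbf{Local parameters.} Item 1 comes directly from Section~\ref{sec:local-codes}. For the lengths, use $n=\gamma p^{(m+1)^2}(1-o(1))$ to get $p^m=\Theta(n^{m/(m+1)^2})$ and $|H|=\Theta(n^{1/(m+1)})$.

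\textbf{Distance.} I will use $\delta\ge\delta_0(\delta_0-\sigma_2(T))$ with $\delta_0=1-r$, as in the proof of Theorem~\ref{thm:optimized_construction}. It then suffices to show $\sigma_2(T)=O(1/\sqrt p)$, and I will get this from Proposition~\ref{prop:general-bound}. For $a\notin H^\perp$, the map $h\mapsto\chi_a(h)$ restricted to $\mathbb F_q$ is a nontrivial additive character of $\mathbb F_q$, since $H$ spans $\mathbb F_q$ over $\mathbb F_p$. The index of $H$ in $\mathbb F_q^\times$ is $e=1/\gamma$. Writing the indicator of $H$ as $\frac1e\sum_{\psi^e=1}\psi$ turns $\sum_{h\in H}\chi_a(h)$ into an average of Gauss sums. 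Each nontrivial Gauss sum has modulus $\sqrt q$, and the trivial multiplicative character contributes $-1$, so $M\le\sqrt q$. Hence $M/|H|\le \sqrt q/(\gamma(q-1))=O(p^{-(m+1)/2})$ and $\sigma_2(T)\le\sqrt{1/p+O(p^{-(m+1)/2})}\to0$. This gives $\delta\ge(1-r)^2$.

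\textbf{Rate.} First I need the analogue of Lemma~\ref{lem:h_degree_of_g_frobenius}. Here $g^{p^k}=X^{p^{k+m}}-X^{p^k}$, and $N:=|H|$ divides $p^{m+1}-1$, so $X^{p^{t}}=X^{p^{t\bmod (m+1)}}h^{Q_t}$ for suitable $Q_t$. Moreover $p^b<N$ for all $b\le m$ once $p>1/\gamma$, so these coefficients are already reduced. Writing $b=k\bmod(m+1)$, this gives $W_k:=\deg_h(g^{p^k})=p^m$ for $b\in\{0,m\}$ and $W_k=p^b$ otherwise.

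Next, by Lemma~\ref{lem:base-u-subadditive}, every $g^iX^j$ with $i=\sum_k i_kp^k$, $j<rp^m$, $\sum_k i_kW_k+j<rN$ and $ip^m+j<n$ lies in $\mathcal W_{r,n}$. These polynomials have distinct degrees, so they are linearly independent. The global constraint restricts $i$ to $p$-adic digits $0\le k\le m^2+m$, with the top digit bounded by roughly $\gamma p$. Among these indices, exactly $2m+2$ are dominant, namely those with $k\equiv0,m \pmod{m+1}$.

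I then normalize $x_k=i_k/p$ and divide the local constraint by $p^{m+1}$. Since $j<rp^m$ is negligible at that scale, the local constraint becomes $\sum_{k\text{ dom}}x_k<r\gamma$, while $z=j/p^m$ ranges independently in $[0,r)$. The remaining digits are asymptotically free. A Riemann-sum argument, as in the proof of Theorem~\ref{thm:optimized_construction}, then gives
\[
R\ \ge\ \frac{p^{m^2+m}\cdot p^{m}}{n}\cdot r\cdot \mathrm{Vol}\{x\in[0,1)^{\mathrm{dom}}:\textstyle\sum x_k<r\gamma,\ x_{\max}<\gamma\}.
\]
Note that the constraint $x_{\max}<\gamma$ is implied by the simplex constraint, because $r<1$.

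\textbf{Main obstacle.} The delicate step is the bookkeeping in this last volume computation: how many dominant indices there are, how the top digit interacts with the $\gamma$-normalization of $n$, and which factors of $r$ and $\gamma$ survive. My count above gives $r^{2m+3}\gamma^{2m+1}/(2m+2)!$, which is smaller than the claimed $\gamma^{2m}r^{2m+2}/(2m+1)!$. Before committing to the final constant I would therefore recheck the $h$-degrees $W_k$, and in particular whether the index $k=m$ really has weight $p^m$ and whether the top digit index is dominant. I would also consider whether a better monomial family, for example allowing larger $j$ via $X^j$ reduced modulo $h$, recovers one factor of $\gamma$ and $r$. Any such count is positive for fixed $r,\gamma,m$, so a positive rate is secured regardless, and the exact constant is the part I expect to need the most care.
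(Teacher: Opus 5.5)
Your route is the paper's route: the Gauss-sum bound fed into Proposition~\ref{prop:general-bound} for the distance, and for the rate, counting $g^iX^j$ via the weights $W_k$ and subadditivity. Your weight formula is also correct. The gap is that, as written, your rate argument does not reach the stated constant, because you miscount the dominant indices.

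Among the digit positions $0\le k\le m^2+m$, the indices with $k\equiv 0 \pmod{m+1}$ are $k=(m+1)\ell$ for $\ell=0,\dots,m$. That gives $m+1$ of them, including the top index $k_{\max}=m(m+1)$. The indices with $k\equiv m \pmod{m+1}$ are $k=(m+1)\ell+m$ only for $\ell=0,\dots,m-1$, since $(m+1)m+m>m^2+m$. That gives $m$ of them. So $|I_{\mathrm{dom}}|=2m+1$, not $2m+2$. The two things you planned to recheck are both true: $W_m=p^m$, and the top index is dominant. Rechecking them would therefore not remove the discrepancy, and you do not need a richer monomial family either.

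There is also a slip in your displayed prefactor. The integer $i$ has $m^2+m+1$ digit positions, so the prefactor is $p^{m^2+m+1}\cdot p^m/n\to 1/\gamma$. As displayed, $p^{m^2+m}\cdot p^m/n$ tends to $0$, although your stated final number implicitly used $1/\gamma$.

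With the correct count, the limiting region is $\{0\le z<r\}\times\{x\in\mathbb{R}_{\ge 0}^{2m+1}:\sum_k x_k<r\gamma\}$. As you note, the global constraint $x_{k_{\max}}<\gamma$ is implied since $r<1$. This region has volume $r\,(r\gamma)^{2m+1}/(2m+1)!$. Multiplying by $1/\gamma$ gives exactly $R\ge \gamma^{2m}r^{2m+2}/(2m+1)!$, which is the paper's Theorem~\ref{thm:params-second} specialized to $\rho=1$.

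Your bound $r^{2m+3}\gamma^{2m+1}/(2m+2)!$ is smaller than this by the factor $r\gamma/(2m+2)$. It still shows positive rate, but item 3 as stated is not established until you fix the count.
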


The proof of Theorem~\ref{thm:params-second2} follows from Theorem~\ref{thm:params-second} below by setting $\rho=1$.

The following theorem summarizes the parameters of the code in the general case of $D\leq n$. 

\begin{theorem}
\label{thm:params-second}
The code $\mathcal{C}$ is an explicit family of expander codes of length $n$ which, as $p \to \infty$, has:
\begin{enumerate}
    \item Left local code length $|G|=p^m = \Theta(n^{\frac{m}{(m+1)^2}})$ and right local code length $|H|=\gamma (p^{m+1}-1) = \Theta(n^{\frac{1}{m+1}})$, with local rate at most $r+o(1)$ and local relative distance at least $1-r-o(1)$.
    \item Global relative distance $\delta \ge \max(1-\rho, (1-r)^2)$.
    \item Global rate bounded below by \[ R \ge \frac{\gamma^{2m}\, r}{(2m+1)!}\left(r^{2m+1}-\max(0,r-\rho)^{2m+1}\right). \]
\end{enumerate}
\end{theorem}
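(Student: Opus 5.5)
The proof follows Theorem~\ref{thm:optimized_construction}. There are three ingredients: the trivial algebraic distance bound, a spectral bound via Proposition~\ref{prop:general-bound} (now using Gauss-sum estimates), and a rate bound via subadditivity of $\deg_h$ (Lemma~\ref{lem:base-u-subadditive}) followed by a polytope-volume count.

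\textbf{Local codes and distance.} The local-length and local-rate claims follow from Section~\ref{sec:local-codes}, using $|G|=p^m$, $|H|=\gamma(p^{m+1}-1)$ and $n=\Theta(p^{(m+1)^2})$. Since $\deg f<D$, we get $\delta\ge 1-D/n\to 1-\rho$.

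For the combinatorial bound $\delta\ge(1-r)(1-r-\sigma_2(T))$ we need $\sigma_2(T)\to 0$. By Proposition~\ref{prop:general-bound} it suffices to bound $M$. Put $q=p^{m+1}$. For $a\in\mathbb F$ and $h\in H\subseteq\mathbb F_q$ we have $\Tr_{\mathbb F/\mathbb F_p}(ah)=\Tr_{\mathbb F_q/\mathbb F_p}(bh)$ with $b=\Tr_{\mathbb F/\mathbb F_q}(a)$. So $h\mapsto\chi_a(h)$ is an additive character of $\mathbb F_q$, and it is nontrivial because $a\notin H^\perp$ forces $b\neq0$ (note $H$ spans $\mathbb F_q$ over $\mathbb F_p$).

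The classical Gauss-period bound for a multiplicative subgroup of index $1/\gamma$ then gives
\[
\Bigl|\sum_{h\in H}\psi(bh)\Bigr|\le \sqrt q
\]
(expand $\mathbf 1_H$ in multiplicative characters; each nontrivial one contributes a Gauss sum of modulus $\sqrt q$). Hence $M/|H|\le \sqrt q/(\gamma(q-1))=O(p^{-(m+1)/2})$, so $\sigma_2(T)\le\sqrt{1/p+O(p^{-(m+1)/2})}=O(1/\sqrt p)$.

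\textbf{Rate.} Write $N:=|H|$. Since $N\mid q-1$, we have $X^{q}=X\cdot h(X)^{(q-1)/N}$. Writing $k=a(m+1)+b$ with $0\le b\le m$, this gives $X^{p^k}=X^{p^b}h(X)^{e}$ for some $e\ge 0$. Moreover $p^b\le p^m<N$ for large $p$, so this expression is already reduced. Therefore
\[
g(X)^{p^k}=X^{p^{k+m}}-X^{p^k}
\]
has $W_k:=\deg_h(g^{p^k})=p^{\max(b,\,(b+m)\bmod (m+1))}$. This equals $p^m$ exactly when $b\in\{0,m\}$, and is at most $p^{m-1}$ otherwise.

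As in the first instantiation, every $g^iX^j$ with $j<rp^m$, $\sum_k i_kW_k+j<rN$ and $ip^m+j<D$ lies in $\mathcal W_{r,D}$, by Lemma~\ref{lem:base-u-subadditive}. These polynomials have distinct degrees, so they are linearly independent. The constraint $ip^m<D\approx\rho\gamma p^{m^2+2m+1}$ forces $i<\rho\gamma p^{m^2+m+1}$, so $i$ has $p$-adic digits $i_0,\dots,i_{m^2+m}$. The top index $m^2+m\equiv 0\pmod{m+1}$ is dominant, and the dominant indices number $(m+1)+m=2m+1$.

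Normalize $x_k=i_k/p$ and $z=j/p^m\in[0,r)$, then divide the local constraint by $p^{m+1}$. As $p\to\infty$ the term $j$ and the subdominant digits drop out, and we are left with
\[
\sum_{k\ \mathrm{dom}}x_k<r\gamma,\qquad x_{m^2+m}<\rho\gamma .
\]
The subdominant digits are asymptotically free and $z$ contributes an independent factor $r$. The volume of the $(2m+1)$-dimensional polytope is computed exactly as in Lemma~\ref{lem:polytope_volume} with $r,\rho$ replaced by $r\gamma,\rho\gamma$:
\[
\frac{\gamma^{2m+1}\bigl(r^{2m+1}-\max(0,r-\rho)^{2m+1}\bigr)}{(2m+1)!}.
\]
Multiplying by the factor $r$ from $z$ and by $p^{m^2+m+1}\cdot p^m$ tuples, then dividing by $n\sim\gamma p^{m^2+2m+1}$, gives the claimed lower bound on $R$.

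\textbf{Main obstacle.} I expect the main obstacle to be the bookkeeping in the $h$-base degree computation, and in particular verifying which residues $b$ are dominant. The difference from the first instantiation is that now $\deg_h(g^{p^k})$ reaches $p^m$, comparable to $N/(\gamma p)$, rather than being $p^{m-1}$ against $N\approx p^m$. This shift is exactly what produces the $\gamma$-scaled polytope and the extra factor $r$. A secondary point needing care is the reduction of the character $\chi_a$ on $H$ to a nontrivial character of $\mathbb F_q$.
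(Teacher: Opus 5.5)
Your proposal is correct and matches the paper's proof essentially step for step. The shared ingredients are the algebraic bound $1-\rho$, the expander bound via Proposition~\ref{prop:general-bound} with the Gauss-sum estimate $M\le\sqrt{q}$, the weights $W_k$ obtained from $X^{p^{m+1}}=X\,h(X)^{(q-1)/|H|}$ (your formula $p^{\max(b,(b+m)\bmod(m+1))}$ is exactly Lemma~\ref{lem:weights-second}), subadditivity of $\deg_h$, and the $\gamma$-scaled $(2m+1)$-dimensional polytope with the decoupled factor $r$ from $z$; your trace-transitivity remark also usefully spells out why $\chi_a$ restricted to $\mathbb{F}_q$ is a nontrivial additive character of $\mathbb{F}_q$, which the paper uses implicitly when invoking Lemma~\ref{lem:gauss-subgroup}.
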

\begin{proof}
Items (1) and the RS local distance bound are immediate from the locality analysis in
Section~\ref{sec:local-codes}.

\paragraph{Distance Bound.}
For (2), we combine the algebraic bound $\delta\ge 1-\rho$ (since $\deg(f)<D$ for $f\in\mathcal W_{r,D}$ and evaluation is injective) with the standard expander-code distance bound applied to the underlying coset graph.
By Theorem~\ref{thm:spectral-bound-second} below, the graph satisfies $\sigma_2(T)=o(1)$ as $p\to\infty$, and hence asymptotically $\delta\ge (1-r)^2$.
Combining the two bounds gives $\delta \ge \max(1-\rho,(1-r)^2)$.
\paragraph{Rate Bound.}
 For (3), we lower bound $\dim(\mathcal W_{r,D})$ by exhibiting many  polynomials of the form 
$g(X)^iX^j$ that are in $\mathcal W_{r,D}$ and they  all have distinct degrees. 

Consider polynomials $g(X)^iX^j$ with $0\le j<rp^m$ and $ip^m+j<D$.
Then, clearly $\deg_g(g^iX^j)=j<r|G|$.
Write $i=\sum_{k=0}^{k_{\max}} i_k p^k$ with $0\le i_k<p$, where
$
k_{\max} = m(m+1)$ since $i< D/p^m \leq p^{m^2+m+1}$.

For $k\geq 0$ define the weight
\[
W_k:=\deg_h\!\big(g^{p^k}\big),
\] to be the $h$-base degree of the polynomial $g^{p^k}.$ The following lemma gives a precise description of the weights. 

\begin{lemma}
\label{lem:weights-second}
For a nonnegative integer  $k$ 
\[
W_k=
\begin{cases}
p^m & k \bmod (m+1) = 0,\\
p^{k\mod(m+1)} & \text{ else}.
\end{cases}
\]
\end{lemma}

By the subadditivity of the $h$-base degree (Lemma~\ref{lem:base-u-subadditive}),
\[
\deg_h(g^iX^j)\ \le\ j+\sum_k i_k W_k.
\]
Thus it suffices to consider the simplified constraint
\[
j+\sum_k i_k W_k < r|H|.
\]
We conclude that any polynomial of the form $g(X)^iX^j$ such that 
\begin{enumerate}
    \item $j<rp^m$ 
    \item \textbf{Global Constraint:} $ip^m+j<D$
    \item \textbf{Local Constraint:}  $j+\sum_k i_k W_k < r|H|$,
\end{enumerate}
is in $\mathcal{W}_{r,D}.$ Next, we give a lower bound on the number of such polynomials. 

Let $I_{\mathrm{dom}}:=\{0\le k\le k_{\max}: k\bmod(m+1)\in\{0,m\}\}$.
By Lemma~\ref{lem:weights-second}, $W_k=p^m$ for $k\in I_{\mathrm{dom}}$ and $W_k\le p^{m-1}$ otherwise.
Clearly,  $|I_{\mathrm{dom}}|=2m+1$.

We define the normalized variables $x_k := i_k/p \in [0,1)$ for $k\in I_{\mathrm{dom}}$, and \begin{equation}
    \label{eq:asy-local-param}
z := j/p^m \in[0,r),
\end{equation} and analyze the constraints asymptotically as $p\to\infty$.

By dividing the  local constraint by $p^{m+1}$ and letting  $p\to\infty$ we get the asymptotic local constraint (as  subdominant terms vanish)
\begin{equation}
\label{eq:asy-local-const2}    
\sum_{k\in I_{\mathrm{dom}}} x_k \ <\ r\gamma,
\end{equation}
since $ |H|/p^{m+1}\to\gamma$ and $j/p^{m+1}\to 0$.

Similarly, by dividing the global  constraint by $p^{m^2+2m+1}$ and letting $p\to\infty$ we get the  asymptotic global constraint
\begin{equation}
\label{eq:asy-globacl-const2}
x_{k_{\max}} < \rho\gamma.
\end{equation}
As before, we approximate the discrete count of valid tuples via a standard Riemann-sum argument.
The $m^2-m$ subdominant variables $x_k$ vanish from the asymptotic limits and therefore are unconstrained in the interval $[0,1)$, contributing a factor of $1$ to the volume. Thus, the fraction of valid tuples converges to the geometric volume of the polytope $\mathcal{P}$ defined by the constraints \eqref{eq:asy-local-param}, \eqref{eq:asy-local-const2} and \eqref{eq:asy-globacl-const2} on the dominant variables $\{x_k\}_{k\in I_{\mathrm{dom}}}$ and $z$.
Since the total number of unconstrained tuples is $p^m \cdot p^{m^2+m+1} = p^{m^2+2m+1}$, the asymptotic dimension of $\mathcal{W}_{r,D}$ is $p^{m^2+2m+1}\mathrm{Vol}(\mathcal{P})$.

The next lemma computes the  volume of $\mathcal{P}$.
\begin{lemma}
\label{lem:polytope_volume2}
The volume of the polytope $\mathcal{P}$ defined by the constraints \eqref{eq:asy-local-param}, \eqref{eq:asy-local-const2} and \eqref{eq:asy-globacl-const2} is
\[
\mathrm{Vol}(\mathcal{P}) = \frac{\gamma^{2m+1}r\left(r^{2m+1}-\max(0,r-\rho)^{2m+1}\right)}{(2m+1)!}.
\]
\end{lemma}

\begin{proof}
    See Appendix \ref{proof-of-polytope-volume}.
\end{proof}

Since the block length is $n\sim \gamma p^{m^2+2m+1}$, dividing the asymptotic dimension $p^{m^2+2m+1}\mathrm{Vol}(\mathcal{P})$ by $n$ yields the stated lower bound on the global rate $R \ge \frac{1}{\gamma}\mathrm{Vol}(\mathcal{P})$.
\end{proof}

Next, we prove Lemma \ref{lem:weights-second}.
\begin{proof}[Proof of Lemma \ref{lem:weights-second}] Write $k=a(m+1)+t$, where $t=k\mod (m+1).$ 
Since $|H|$ divides $p^{m+1}-1$, we have the identity
\[
X^{p^{m+1}} = X^{p^{m+1}-1}\cdot X = X^{|H|\cdot \frac{p^{m+1}-1}{|H|}}\cdot X
= X\cdot h(X)^{\frac{p^{m+1}-1}{|H|}}
= X\cdot h(X)^{\frac{1}{\gamma}}.
\]
Then by induction on $a$, for any positive integer $a$ we have
\[
X^{p^{a(m+1)}}=X\cdot h(X)^{(\text{some integer})}.
\]
Raising both sides to $p^t$ shows that $X^{p^k}$ can be written in base $h$ as
\[
X^{p^k}=X^{p^t}\cdot h(X)^{(\text{some integer})},
\]
with coefficient degree $\deg(X^{p^t})=p^t\leq p^m<\deg(h)$.

Now
\begin{equation}
\label{eq:weights}
W_k=\deg_h\left(g^{p^k}\right) = \deg_h\left(X^{p^{m+k}}-X^{p^k}\right).
\end{equation}

If $t=0$, then $k=a(m+1)$ and $m+k=a(m+1)+m$, so the RHS of \eqref{eq:weights} becomes
$$\deg_h\left(
X^{p^m}\cdot h(X)^{(\text{some integer})}-X\cdot h(X)^{(\text{some integer})}
\right)=p^m.$$

If $1\le t\le m$, then $m+k=a(m+1)+m+t=(a+1)(m+1)+(t-1)$,
so the RHS of \eqref{eq:weights} becomes
$$\deg_h\left(
X^{p^{(t-1)}}\cdot h(X)^{(\text{some integer})}-X^{p^t}\cdot h(X)^{(\text{some integer})}
\right)=p^t.$$
\end{proof}

\subsection{Spectral Expansion via Gauss Sums}
In this section we show that the coset graph underlying the code instantiation of Section~\ref{sec:second-code-instan} is a strong spectral expander, with second singular value on the order of $O(1/\sqrt{p})$. This expansion bound follows by applying Proposition~\ref{prop:general-bound}. 

To this end, it suffices to bound the additive character sums
\[
\left|\sum_{h\in H}\chi_a(h)\right|
\qquad\text{for } a\notin H^\perp.
\]
We obtain the required estimate via a standard application of Gauss sums, as we now explain.

\begin{lemma}
\label{lem:gauss-subgroup}
Let $H\le \mathbb F_q^\times$ be a multiplicative subgroup.
Then for every nontrivial additive character $\psi$ of $\mathbb F_q$,
\[
\left|\sum_{h\in H}\psi(h)\right|\ \le\ \sqrt{q}.
\]
\end{lemma}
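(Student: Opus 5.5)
The plan is the classical reduction of a subgroup character sum to Gauss sums. Let $e=(q-1)/|H|$ be the index of $H$ in the cyclic group $\mathbb F_q^\times$. The first step expresses the indicator of $H$ through multiplicative characters. A multiplicative character $\eta$ of $\mathbb F_q^\times$ is trivial on $H$ exactly when $\eta^{|H|}$ is trivial, and there are exactly $e$ such characters. Orthogonality of characters on the quotient $\mathbb F_q^\times/H$ then gives, for $x\in\mathbb F_q^\times$,
\[
\mathbf 1_H(x)=\frac{1}{e}\sum_{\eta:\ \eta|_H=1}\eta(x).
\]
Substituting this into the sum and exchanging the order of summation yields
\[
\sum_{h\in H}\psi(h)=\frac1e\sum_{\eta|_H=1} G(\eta,\psi),\qquad G(\eta,\psi):=\sum_{x\in\mathbb F_q^\times}\eta(x)\psi(x).
\]

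The second step bounds each Gauss sum. For the trivial character $\eta_0$ we get $G(\eta_0,\psi)=\sum_{x\neq0}\psi(x)=-1$, because $\psi$ is nontrivial and the sum of $\psi$ over all of $\mathbb F_q$ vanishes. For each of the remaining $e-1$ nontrivial $\eta$ we use the standard evaluation $|G(\eta,\psi)|=\sqrt q$. If one wants to prove it rather than quote it, expand $|G|^2=\sum_{x,y\neq 0}\eta(xy^{-1})\psi(x-y)$, substitute $x=ty$, and sum over $y$. The inner sum $\sum_{y\ne0}\psi((t-1)y)$ equals $q-1$ when $t=1$ and $-1$ otherwise. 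Since $\sum_t\eta(t)=0$, the total is $(q-1)+1=q$.

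The triangle inequality then gives
\[
\Bigl|\sum_{h\in H}\psi(h)\Bigr|\le \frac1e\bigl(1+(e-1)\sqrt q\bigr)\le\sqrt q,
\]
where the last step uses $1\le\sqrt q$. This proves the lemma.

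The only step that requires genuine input is the Gauss sum magnitude $|G(\eta,\psi)|=\sqrt q$; everything else is bookkeeping with orthogonality. When $e=1$ the bound reduces to the trivial $|-1|\le\sqrt q$. The lemma is applied through Proposition~\ref{prop:general-bound} with $\psi=\chi_a$ for $a\notin H^\perp$. This character, viewed on $\mathbb F_q\supseteq H$, is nontrivial: its triviality on $\mathbb F_q$ would force $a\in\mathbb F_q^\perp\subseteq H^\perp$. That check belongs to the subsequent spectral theorem and is not needed for the lemma itself.
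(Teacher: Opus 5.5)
Your proof is correct and is essentially the paper's argument. The paper writes the $e$ characters trivial on $H$ as the powers $\chi^j$ of a fixed character of exact order $e$, isolates the trivial-character term $-1$, and bounds the remaining $e-1$ Gauss sums by $\sqrt q$, just as you do; you additionally verify $|G(\eta,\psi)|=\sqrt q$ rather than quoting it.

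One harmless slip: $\eta$ is trivial on $H$ if and only if $\eta^{e}$ (not $\eta^{|H|}$) is trivial, since $H$ is exactly the set of $e$-th powers in $\mathbb F_q^\times$. This is what makes the count of such characters equal to $e$. The rest of your argument only uses the correct description via characters of the quotient $\mathbb F_q^\times/H$, so nothing downstream is affected.
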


\begin{proof}
Let   $s=(q-1)/|H|$ be the index of $H$, and fix a multiplicative character $\chi$ of $\mathbb F_q^\times$ of exact order $s$.
Then, the indicator of $H$ satisfies
\[
\mathbf{1}_H(x) = \frac1s\sum_{j=0}^{s-1}\chi(x)^j \qquad (x\in\mathbb F_q^\times).
\]
Hence
\[
\sum_{h\in H}\psi(h)
=\sum_{x\in\mathbb F_q^\times}\mathbf{1}_H(x)\psi(x)
=\frac1s\sum_{j=0}^{s-1}\sum_{x\in\mathbb F_q^\times}\chi(x)^j\psi(x).
\]
The $j=0$ term equals $\sum_{x\in\mathbb F_q^\times}\psi(x)=-1$ since $\psi$ is nontrivial.
For $j\ne 0$ the inner sum is a Gauss sum and has magnitude $\sqrt{q}$.
Thus
\[
\left|\sum_{h\in H}\psi(h)\right|
\le \frac1s\Bigl(1+(s-1)\sqrt{q}\Bigr)\le \sqrt{q}.
\]
\end{proof}

\begin{theorem}
\label{thm:spectral-bound-second}
Let $T$ be the normalized bi-adjacency operator of the coset graph $\mathsf{B}(G,H)$.
Then
\[
\sigma_2(T)\ \le\ \sqrt{\frac1p+\frac{1+o(1)}{\gamma\, p^{(m+1)/2}}}\;=\;\frac{1+o(1)}{\sqrt{p}}.
\]
\end{theorem}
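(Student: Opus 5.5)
The plan is to apply Proposition~\ref{prop:general-bound} directly. It gives $\sigma_2(T)\le\sqrt{1/p+M/|H|}$, where $M=\max_{a\in\mathbb{F}\setminus H^\perp}\left|\sum_{h\in H}\chi_a(h)\right|$. So the whole task reduces to showing $M\le\sqrt{q}=p^{(m+1)/2}$ with $q=p^{m+1}$.

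The first step is to move from $\mathbb{F}$ down to $\mathbb{F}_q$. Here $\chi_a$ is defined with the trace $\Tr_{\mathbb{F}/\mathbb{F}_p}$ on the big field $\mathbb{F}$, while $H\subseteq\mathbb{F}_q$. By transitivity of the trace, $\Tr_{\mathbb{F}/\mathbb{F}_p}(ah)=\Tr_{\mathbb{F}_q/\mathbb{F}_p}\big(\Tr_{\mathbb{F}/\mathbb{F}_q}(a)\,h\big)$ for $h\in\mathbb{F}_q$. Setting $b:=\Tr_{\mathbb{F}/\mathbb{F}_q}(a)\in\mathbb{F}_q$, we get $\chi_a(h)=\psi_b(h)$, where $\psi_b(x)=\exp\!\big(\tfrac{2\pi i}{p}\Tr_{\mathbb{F}_q/\mathbb{F}_p}(bx)\big)$ is an additive character of $\mathbb{F}_q$.

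The second step is to check that $\psi_b$ is nontrivial whenever $a\notin H^\perp$. If $b=0$, then $\Tr_{\mathbb{F}/\mathbb{F}_p}(ah)=0$ for every $h\in H$, which says exactly that $a\in H^\perp$, a contradiction. So $b\neq 0$. Since $\mathbb{F}_q$ is a finite field with nondegenerate trace form, a nonzero $b$ makes $\psi_b$ a nontrivial additive character of $\mathbb{F}_q$.

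The third step is to invoke Lemma~\ref{lem:gauss-subgroup} with $H\le\mathbb{F}_q^\times$. This gives $\left|\sum_{h\in H}\chi_a(h)\right|=\left|\sum_{h\in H}\psi_b(h)\right|\le\sqrt{q}$, hence $M\le p^{(m+1)/2}$.

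Finally, substitute $|H|=\gamma(p^{m+1}-1)=\gamma p^{m+1}(1+o(1))$. This gives $M/|H|\le (1+o(1))/(\gamma p^{(m+1)/2})$, which is the first claimed bound. For the asymptotic form, use $m\ge2$: then $(m+1)/2>1$, and since $\gamma$ is fixed, the term $1/(\gamma p^{(m+1)/2})$ is $o(1/p)$. Hence $\sigma_2(T)\le\sqrt{(1+o(1))/p}=(1+o(1))/\sqrt{p}$.

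The step needing the most care is the trace-reduction in the first two steps. The bound in Proposition~\ref{prop:general-bound} is stated over $\mathbb{F}$, while the Gauss sum lemma is over $\mathbb{F}_q$, so one must make sure the hypothesis $a\notin H^\perp$ really does yield a nontrivial character of $\mathbb{F}_q$. The argument above handles this directly through $b=\Tr_{\mathbb{F}/\mathbb{F}_q}(a)\neq0$. The rest is bookkeeping.
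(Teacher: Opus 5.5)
Your proposal is correct and follows essentially the same route as the paper: Proposition~\ref{prop:general-bound} combined with the Gauss-sum bound of Lemma~\ref{lem:gauss-subgroup} gives $M\le\sqrt{q}$, and substituting $|H|=\gamma(q-1)$ yields the stated bound. Your trace reduction via $b=\Tr_{\mathbb{F}/\mathbb{F}_q}(a)\neq 0$, and your remark that $m\ge 2$ makes $1/(\gamma p^{(m+1)/2})=o(1/p)$, spell out steps that the paper compresses into a line each ("$\chi_a$ is nontrivial since $a\notin H^\perp$" and "use $|H|=\gamma(q-1)$ to obtain the asymptotic form").
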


\begin{proof}
Let  $\chi_a$ be an additive character of $\mathbb{F}$ with $a\in \mathbb{F}\setminus H^\perp$. Therefore, $\chi_a$  is a nontrivial character since $a\notin H^\perp$. Hence, by Lemma~\ref{lem:gauss-subgroup} we have
\[
\left|\sum_{h\in H}\chi_a(h)\right|\ \le\ \sqrt{q}\Rightarrow M=\max_{a\in \mathbb{F}\setminus H^\perp}\left|\sum_{h\in H}\chi_a(h)\right|\ \le\ \sqrt{q}.
\]
Plugging into Proposition~\ref{prop:general-bound} gives
\[
\sigma_2(T)\le \sqrt{\frac1p+\frac{M}{|H|}} \le \sqrt{\frac1p+\frac{\sqrt{q}}{|H|}}.
\]
Finally use $|H|=\gamma\cdot (q-1)$  to obtain the asymptotic form.
\end{proof}

\section{Concluding Remarks and Open Questions}
\label{sec:conclusion}
In this work, we presented a general framework for constructing algebraic expander codes utilizing the action of subgroups of the affine group $\mathrm{AGL}(1, \mathbb{F})$. By instantiating this framework with non-commuting translation and scaling subgroups, we derived explicit families of codes that circumvent the classical rate barrier for expander codes. Specifically, our constructions achieve a positive global rate even when the local code rate satisfies $r \le 1/2$, a regime where generic expander constructions typically fail.

We conclude by highlighting several open problems and directions for future research arising from this construction.

\begin{enumerate}
    \item \textbf{Tightness of the Rate Bound.}
    The lower bound on the global rate $R(r, m, \rho)$ derived in Theorems \ref{thm:optimized_construction}  and \ref{thm:params-second} establishes the existence of asymptotically good codes, but it exhibits asymptotic behaviors that suggest it is far from tight.
    \begin{itemize}
        \item \emph{Saturation:} The bound saturates once the normalized global degree $\rho$ exceeds the local rate $r$. Intuitively, increasing the allowed global degree $D$ (and thus $\rho$) should strictly increase the dimension of the message space $\mathcal{W}_{r,D}$, yet our polytope volume argument remains constant in this regime.
        \item \emph{Vanishing with Sparsity:} As the sparsity parameter $m$ increases, our lower bound vanishes factorially (scaling roughly as $1/(2m)!$). However, in the regime where the local rate satisfies $r > 1/2$, the standard constraint-counting bound for expander codes (Remark \ref{rem:counting-bound}) guarantees a rate of at least $2r-1$. This standard bound is independent of the graph degree (and thus independent of $m$). The fact that our algebraic bound approaches zero for large $m$, even when $r > 1/2$, indicates that our analysis, which relies on the simultaneous satisfaction of $g$-base and $h$-base degree constraints, likely underestimates the true dimension of the code. Bridging the gap between our bound and the constant $2r-1$ baseline is a key challenge.
    \end{itemize}

    \item \textbf{Optimizing Code Parameters (Degree and Expansion).}
    The coset graphs constructed in this work have a degree that grows polynomially with the block length ($n^{1/m}$). While this improves upon the linear density of commuting-subgroup constructions, achieving \emph{constant} degree (or polylogarithmic degree) remains a major goal for algebraic expander codes.
    \begin{itemize}
        \item Is it possible to instantiate this framework with different subgroups of $\mathrm{AGL}(1, \mathbb{F})$ (or potentially $\mathrm{AGL}(k, \mathbb{F})$) to yield constant-degree graphs?
        \item Furthermore, while we proved that the constructed graphs are strong spectral expanders, determining whether they can achieve the optimal Ramanujan bound is of  interest.
    \end{itemize}

    \item \textbf{Local Testability.}
    The codes constructed here share structural similarities with affine-invariant codes and lifted codes, which are known to support local testing. Given that these codes are defined by local Reed--Solomon constraints on an expander graph, it is natural to ask whether they are also locally testable. 

    \item \textbf{High-Dimensional Expanders (HDX).}
    Our construction utilizes the interaction between two subgroups, $G$ and $H$, to define a bipartite graph (a $1$-dimensional simplicial complex) where edges correspond to group elements and vertices correspond to cosets. A natural generalization is to consider the geometry generated by \emph{three} subgroups $G_1, G_2, G_3 \le \mathrm{AGL}(1, \mathbb{F})$. Can such a construction define a $2$-dimensional simplicial complex (e.g., where faces correspond to elements of $\langle G_1,G_2,G_3\rangle$ and vertices to cosets of the $G_i$'s) that exhibits high-dimensional expansion? If so, placing local codes on the faces of this complex could lead to new algebraic constructions of HDX codes.
\end{enumerate}

\appendix
\section{Proofs of Section \ref{sec:graph-analysis}}
\label{proofs-section-3}

\begin{proof}[Proof of Theorem \ref{thm:structural-theorem}]
We begin by showing that $S$ is an $H$-invariant subspace.
By definition, $S$ is an additive group. Next, we show that $S$ is closed under multiplication by $H$. For any $s = \sum h_k g_k \in S$ and any $h' \in H$, we have $h's = \sum (h'h_k)g_k$. Since $H$ is a group, $h'h_k \in H$, so $h's$ is a sum of elements in $H \cdot G$.

Next, let $\mathcal{A} = \{ \phi_{s,h} : x \mapsto hx + s \mid h \in H, s \in S \}$. We will show that $\mathcal{A}$ is  a group and that $\mathcal{A}=A.$

\paragraph{$\mathcal{A}$ is a Group.}
Consider the composition of $\phi_{s_1, h_1}, \phi_{s_2, h_2} \in \mathcal{A}$:
\[
    (\phi_{s_1, h_1} \circ \phi_{s_2, h_2})(x) = h_1(h_2 x + s_2) + s_1 = (h_1 h_2)x + (h_1 s_2 + s_1).
\]
Since $H$ is a group, $h_1 h_2 \in H$. Since $S$ is an additive group closed under $H$, $h_1 s_2 \in S$, and thus $s' = s_1 + h_1 s_2 \in S$. The inverse of $\phi_{s,h}$ is $\phi_{-h^{-1}s, h^{-1}}$, which is also in $\mathcal{A}$. Thus $\mathcal{A} \le \mathrm{AGL}(1, \mathbb{F})$.

\paragraph{$A = \mathcal{A}$.}
Clearly $G \subset \mathcal{A}$ (via $h=1$) and $H \subset \mathcal{A}$ (via $s=0$). Thus $A \subseteq \mathcal{A}$. Conversely, let $\phi_{s,h}\in \mathcal{A}$. Since $\phi_{s,h}=\phi_{0,h}\circ \phi_{h^{-1}s,1}$, where $\phi_{0,h}\in H$ and $h^{-1}s\in S$, it suffices to show  that for any $ s \in S$, $\phi_{s,1}\in A.$ Next, since an $s\in S$ is a sum of terms $hg$, it suffices to show that the translation $x \mapsto x + hg\in A$. Indeed, it  can be realized as  $\phi_{0,h} \circ \phi_{g,1} \circ \phi_{0, h^{-1}}$, where each transformation is either in $G$ or $H$.  Thus $\mathcal{A} \subseteq A$.

\paragraph{Semidirect Product.}
The mapping $(s, h) \mapsto \phi_{s,h}$ is clearly a bijection between $S \rtimes H$ and $A$. The group operation of $\mathcal{A}$ matches the standard semidirect product definition.
\end{proof}

\begin{proof}[Proof of Lemma \ref{lem:S-span}]
Let
\[
K \;:=\; \Span_{\mathbb F_p}(H)
\;=\;
\left\{\sum_{i=1}^\ell c_i h_i \;:\; \ell\ge 0,\ c_i\in \mathbb F_p,\ h_i\in H\right\}\subseteq \mathbb F.
\]
We first claim that $K$ is a subfield of $\mathbb F$ and hence $K=\mathbb F_p(H)$.

Indeed, $K$ is closed under addition by definition and contains $1$.
It is also closed under multiplication. Indeed,  if $x=\sum_i c_i h_i$ and $y=\sum_j d_j h'_j$, then
\[
xy=\sum_{i,j} (c_i d_j)\,(h_i h'_j),
\]
and since $c_i d_j\in\mathbb F_p$ and $h_i h'_j\in H$,
we have $xy\in K$. Thus $K$ is a finite subring of the field $\mathbb F$.
Being a subring of a field, $K$ has no zero divisors, hence $K$ is a finite integral domain,
and therefore a field. Since $K$ contains $\mathbb F_p$ and $H$, minimality of $\mathbb F_p(H)$ gives
$\mathbb F_p(H)\subseteq K$. For the other direction,  $K\subseteq \mathbb F_p(H)$ because $\mathbb F_p(H)$
is closed under $\mathbb F_p$-linear combinations. Hence $K=\mathbb F_p(H)$.

Next, we prove the desired equality. First, note that every  $hg$ with $h\in H$ and $g\in G$
lies in $\Span_{\mathbb F_p(H)}(G)$ because $h\in \mathbb F_p(H)$. Since
$\Span_{\mathbb F_p(H)}(G)$ is an additive subgroup, it follows that
\[
S=\sum_{h\in H} hG \;\subseteq\; \Span_{\mathbb F_p(H)}(G).
\]

For the reverse inclusion, we show that $S$ is closed under multiplication by scalars from
$\mathbb F_p(H)=K$. By construction, $S$ is an additive subgroup closed under multiplication by  $h\in H$ and multiplication by  $c\in \mathbb F_p$ (since $G$ is an $\mathbb F_p$-subspace).
Therefore, for any $a=\sum_i c_i h_i\in K$ and any $s\in S$,
\[
a s = \left(\sum_i c_i h_i\right)s = \sum_i c_i (h_i s)\ \in\ S,
\]
so $S$ is a $K$-vector space. Since $G\subseteq S$, we conclude that
$
\Span_{K}(G)\ \subseteq\ S.
$
Combining both inclusions gives $S=\Span_{\mathbb F_p(H)}(G)$.
\end{proof}

\section{Proof of Lemma~\ref{lem:moore}}
\label{Proof of Lemma moore}
\begin{proof}
By Lemma~\ref{lem:S-span}, we have $S=\Span_{\mathbb{F}_{p^m}}(G)$.
Next, let $(v_1, \ldots, v_m)$ be a basis of $G$ over $\mathbb{F}_p$. We claim that this basis is also linearly independent over $\mathbb{F}_{p^m}$, which implies
$\abs{\Span_{\mathbb{F}_{p^m}}(G)} = p^{m^2}$.

Suppose, for the sake of contradiction, that \(v_1, \ldots, v_m\) are linearly dependent over \(\mathbb{F}_{p^m}\).  
Let \(t \ge 2\) be the minimal number of vectors in a dependent subset, and assume without loss of generality that \(v_1, \ldots, v_t\) are such a set. Then there exist coefficients \(\alpha_i \in \mathbb{F}_{p^m}\) with
\[
\sum_{i=1}^t \alpha_i v_i = 0, 
\quad \text{and} \quad \alpha_1 = 1.
\]

Applying the polynomial \(g\) (which is \(\mathbb{F}_p\)-linear) to both sides yields
\[
0 
= g\!\left( \sum_{i=1}^t \alpha_i v_i \right) 
= \sum_{i=1}^t \left( \alpha_i^{p^m} v_i^{p^m} + \alpha_i^p v_i^p + \alpha_i v_i \right).
\]
Since each \(v_i\) is a root of \(g\), we have \(v_i^{p^m} + v_i^p + v_i = 0\).  
Subtracting \(\sum_{i=1}^t \alpha_i\big( v_i^{p^m} + v_i^p + v_i \big)\) from the above and using \(\alpha_i^{p^m} = \alpha_i\) (because \(\alpha_i \in \mathbb{F}_{p^m}\)) gives
\[
\sum_{i=1}^t \big( \alpha_i^p - \alpha_i \big) v_i^p = 0.
\]
The coefficient of \(v_1^p\) is zero since \(\alpha_1 = 1\), so we obtain
\begin{equation}
\label{eq:relation-vp}
\sum_{i=2}^t \big( \alpha_i^p - \alpha_i \big) v_i^p = 0.
\end{equation}

Since all the $v_i$ lie in some finite field $\mathbb{F}_{p^s}$, raising \eqref{eq:relation-vp} to the $p^{s-1}$-th power maps $v_i^p$ to $v_i$ and yields
\[
\sum_{i=2}^t \big( \alpha_i^p - \alpha_i \big)^{p^{s-1}} v_i = 0.
\]
By the minimality of \(t\), all coefficients in this relation must be zero, i.e.,
\[
\alpha_i^p - \alpha_i = 0
\quad \text{for all } i \ge 2.
\]
Thus \(\alpha_i \in \mathbb{F}_p\) for all \(i\), including \(\alpha_1 = 1 \in \mathbb{F}_p\).

Therefore, the original dependence relation is an \(\mathbb{F}_p\)-linear dependence among \(v_1, \ldots, v_t\), contradicting the fact that they are part of an \(\mathbb{F}_p\)-basis.  
It follows that \(v_1, \ldots, v_m\) are linearly independent over \(\mathbb{F}_{p^m}\).
\end{proof}

\section{Proof of Lemma \ref{lem:polytope_volume}}
\label{proof-lemma-polytope-volume}
\begin{proof}
We integrate over $y = x_{k_{\max}}$. The range is $0 \le y < \min(r, \rho)$. For a fixed $y$, the volume of the remaining polytope on the remaining $C$ variables  is $\frac{(r-y)^C}{C!}$. We prove it by induction on $C$.  For $C=1$ the region is the interval $[0,r-y)$ of length $r-y$. Assume the claim holds for $C-1$. For $C$ variables, fix $x_1=t\in[0,r-y]$, then  the volume of the remaining polytope on the remaining $C-1$ variables is by the induction hypothesis $\frac{(r-y-t)^{C-1}}{(C-1)!}$. Integrating $t$ gives $$\int_{0}^{r-y}\frac{(r-y-t)^{C-1}}{(C-1)!}\,dt=\frac{(r-y)^C}{C!}.$$ Lastly, the volume of the polytope $\mathcal{P}$ equals 

$$ \text{Vol}(\mathcal{P}) = \int_0^{\min(r, \rho)} \frac{(r-y)^C}{C!} dy = \left[ -\frac{(r-y)^{C+1}}{(C+1)!} \right]_0^{\min(r, \rho)}. $$
Since $r-\min(r, \rho) = \max(0, r-\rho)$, the result follows.
\end{proof}

\section{Proof of Lemma \ref{lem:polytope_volume2}}
\label{proof-of-polytope-volume}

\begin{proof}
Recall that $\mathcal{P}$ is defined over the variables $\{x_k\}_{k\in I_{\mathrm{dom}}}$ and $z$ by
\[
0 \le z < r,\qquad x_k \ge 0\ (k\in I_{\mathrm{dom}}),\qquad
\sum_{k\in I_{\mathrm{dom}}} x_k < r\gamma,\qquad
x_{k_{\max}} < \rho\gamma,
\]
where $|I_{\mathrm{dom}}|=2m+1$ and $k_{\max}\in I_{\mathrm{dom}}$.

First note that $z$ is completely decoupled from the other constraints, and contributes a multiplicative factor of
$
\mathrm{Vol}(\{z: 0\le z<r\}) = r.
$
Hence,
$
\mathrm{Vol}(\mathcal{P}) = r \cdot \mathrm{Vol}(\mathcal{Q})
$
where $\mathcal{Q}$ is the polytope in the $2m+1$ variables $\{x_k\}_{k\in I_{\mathrm{dom}}}$.
For a fixed  $y := x_{k_{\max}}\in (0, \min(r\gamma,\rho\gamma)]$, the volume of the polytope defined by the remaining $2m$ variables 
equals $\frac{(r\gamma-y)^{2m}}{(2m)!}$. The proof follows by induction as in the proof of Lemma \ref{lem:polytope_volume}.

Therefore,
\begin{align*}
\mathrm{Vol}(\mathcal{Q})
&=
\int_{0}^{\min(r\gamma,\rho\gamma)} \frac{(r\gamma-y)^{2m}}{(2m)!}\,dy
=\left[ -\frac{(r\gamma-y)^{2m+1}}{(2m+1)!} \right]_{0}^{\min(r\gamma,\rho\gamma)}\\
&
=
\frac{(r\gamma)^{2m+1} - (r\gamma-\min(r\gamma,\rho\gamma))^{2m+1}}{(2m+1)!}\\
&=\frac{\gamma^{2m+1}\left(r^{2m+1}-\max(0,r-\rho)^{2m+1}\right)}{(2m+1)!},
\end{align*}
where the last equality follows since 
\[
r\gamma-\min(r\gamma,\rho\gamma)=\max(0,r\gamma-\rho\gamma)=\gamma\max(0,r-\rho).
\]
Multiplying by the factor $r$ coming from $z$ yields
the result.
\end{proof}

\bibliographystyle{abbrv}
 \bibliography{reference}
\end{document}